\documentclass[11pt, tightenlines, twoside, onecolumn, nofloats, nobibnotes, nofootinbib, superscriptaddress, noshowpacs, centertags]{revtex4-2}
\usepackage{amsthm}
\newtheorem*{theorem*}{Theorem}
\usepackage{ljm}
\begin{document}

\titlerunning{On the extension of singular linear} 
\authorrunning{V.A. Glazatov, V.Zh. Sakbaev} 

\title{On the extension of singular linear infinite-dimensional
Hamiltonian flows\\
}

\author{\firstname{V.~A.}~\surname{Glazatov}}
\email[E-mail: ]{glazv96@yandex.ru}
\affiliation{Keldysh Institute of Applied Mathematics (Russian Academy of Sciences), 4 Miysskaya sq., Moscow 125047, Russia}
\affiliation{Moscow Institute of Physics and Technology, 9 Institutskiy per., Dolgoprudny, Moscow Region, 141701, Russia}

\author{\firstname{V.~Zh.}~\surname{Sakbaev}}
\email[E-mail: ]{fumi2003@mail.ru}
\affiliation{Keldysh Institute of Applied Mathematics (Russian Academy of Sciences), 4 Miysskaya sq., Moscow 125047, Russia}


\received{May 1, 2024} 

\begin{abstract} 
We  study Hamiltonian flows in a real separable Hilbert space endowed with a symplectic structure. Measures on the Hilbert space that are invariant with respect to the flows of completely integrable Hamiltonian systems are investigated. These construction gives the opportunity to describe Hamiltonian flows in the phase space by means of unitary groups in the space of functions that are quadratically integrable by the invariant measure. Invariant measures are applied to the study of model linear Hamiltonian systems that admit features of the type of unlimited increase in kinetic energy over a finite time. Due to this approach solutions of  Hamilton equations that admit singularities can be described by means of the phase flow in the extended phase space and by the corresponding Koopman representation of the unitary group.
\end{abstract}

\subclass{Primary 28C20; Secondary 28A35} 

\keywords{symplectomorphism, translation-invariant measure, A. Weil's theorem, Hamiltonian flow, Koopman representation} 

\maketitle

\section{Introduction}
In a number of problems of mathematical physics arise Hamiltonian systems,
whose phase space is an infinite-dimensional separable Hilbert space. In particular, the space
endowed with a translationally invariant symplectic form \cite{Bourgain, Zhidkov, Cher-Mars}.
In some situations, the phase trajectories of an infinite-dimensional Hamiltonian system allow it to go to infinity in a finite time.
The article examines a model example of the phenomenon of going to infinity. This example is described by a linear system of Hamilton equations.

This paper studies quadratic Hamiltonian functions on a real separable Hilbert space.
This space is endowed with a symplectic structure. Also, on this space there is a non-negative finitely additive measure,
invariant under the group of symplectomorphisms \cite{Bourgain, S23, BS-direct}.
We consider a quadratic form of the Hamiltonian, which is not majorized by the quadratic form of the Hilbert norm from above,
neither from below. It is shown that the trajectories of such Hamiltonian systems allow them to go to infinity in the phase space in a finite time.

To describe the phase flow of Hamiltonian systems,
whose trajectories go to infinity in a finite time, the extended phase space is introduced.
The phase space extension is a locally convex space. The original Hilbert space is tightly embedded in this space.
The Hamiltonian function, densely defined on the original Hilbert space, can be extended to the extended phase space,
trajectories of a Hamiltonian system, symplectic form and invariant measure.
A new class of invariant measures on the extended phase space is also presented.
In this case, the extended Hamilton function is densely defined on the extended phase space.
Besides, the extended symplectic form is not a restricted bilinear form.
But the extended phase flow preserves the extended symplectic form and the measure on the extended phase space.

The presence of the measure invariant with respect to the Hamiltonian flow on the extended phase space
allows us to obtain the Koopman unitary representation of the flow in the Hilbert space of functions,
quadratically integrable with respect to the invariant measure. The Koopman unitary group is not continuous in the strong operator topology.
An analysis of the spectral properties of operators of the Koopman group is given. This allows us to find invariant subspaces,
 the restriction to which the unitary group has the property of strong continuity.

An important role in the implementation of the proposed research program for infinite-dimensional
Hamiltonian systems are played by the measure on the phase space that is invariant under the group of Hamiltonian transformations.

The study of measures on topological vector spaces that are invariant under transformation groups,
meets Weyl's theorem. Therefore, we need to analyze measures that do not have some of the properties of the Lebesgue measure.

Finitely additive measures, including analogues of the Lebesgue measure on infinite-dimensional locally convex
spaces, have applications to the study of quantization of infinite-dimensional Hamiltonian systems (in particular, secondary quantization).
Also, such measures are needed for problems of the statistical mechanics, the theory of quantum changes, for the study of random unitary groups and the dynamics of open quantum systems \cite{Srinivaz, KS-18, SSh, OSS-16, S23}.
One of the important properties of the Lebesgue measure on a finite-dimensional Euclidean space as an Abelian topological group with the operation of addition of elements
is not only its invariance under the action of an arbitrary element of the group (i.e., a shift by an arbitrary vector),
but also with respect to shifts along the trajectories of divergence-free vector fields, in particular, with respect to Hamiltonian transformations.

The group of translations by space vectors is a subgroup of the group of Hamiltonian flows,
 generated by Hamiltonian functions linear in coordinates and momenta. Therefore, the construction of translation invariant measures on
locally convex spaces is an important step in the study of the problem posed
\cite{Baker, Pantsulaya,
Zavadskii, SSh, s16}.
In the works \cite{GS22, S23, BS-direct} the continuation of the measure from the work \cite{s16} to a wider ring of subsets was studied,
invariant under flows generated by certain Hamiltonian fields.
We will call such extensions symplectic measures.

The class of quadratic Hamiltonians (hyperbolic oscillators) on the Hilbert phase space is studied.
For this class, solutions to the linear system of Hamilton equations allow the phenomenon of an unlimited increase in kinetic energy
in a finite time. It is shown that the dynamics of such Hamiltonian systems admits a natural continuation from the Hilbert phase
 space onto the locally convex phase space containing it. Namely, a continuation with
implectic form from the Hilbert space to the topological vector space of number sequences.
For such a continuation, the phase flow allows a single coordinate-wise continuation into the extended phase space.
Invariant measures on extended phase space
are constructed as products of finitely additive translation invariant measures on
countable collection of finite-dimensional Euclidean spaces. This is necessary to obtain the Koopman representation of the flow.

This work continues the studies of the blow-up phenomenon in linear systems begun in \cite{GS22, S23}.
A new design for expanding the phase space and continuing phase trajectories is proposed.
The introduction of invariant measures on the extended phase space made it possible to describe the Koopman representation
phase flow in terms of spectral properties of unitary transformations.

Section 2 gives a description of a homogeneous symplectic structure on a separable Hilbert space.

In section 3
a finitely additive measure was constructed on a real seperable Hilbert space,
equipped with a standard symplectic structure. The resulting measure is invariant under Hamiltonian flows,
preserving two-dimensional symplectic subspaces.

Section 4 considers applications of the invariant measure to Hamiltonian systems.
Classes of Hamiltonian systems, including linear and quadratic, have been studied.

Section 5 defines the procedure for expanding the phase space and the procedure for continuing the trajectories of a Hamiltonian system,
leaving the original phase space in a finite time into the expanded space.

In Section 6, the Koopman representation of the Hamiltonian flow in the extended phase space is obtained. For this purpose it is used
unitary group in the space of functions that are quadratically integrable with respect to a flow-invariant measure.

\section{Symplectic structure}
A symplectic structure on a real separable Hilbert space $E$ is a non-degenerate closed differential 2-form on the space $E$. If the symplectic structure is on
Hilbert space $E$
is invariant under translations, then it is given by a non-degenerate skew-symmetric bilinear form $\omega$ on the space $E$ (in this case, the Hilbert space $E$ is identified with its conjugate).
A symplectic structure $\omega$ on a real separable Hilbert space $E$ is said to be natural if in the space $E$ there exists an orthonormal basis $\{ g_k\}\equiv \mathcal G$ such that $\omega (g_{2k-1 },g_j)=\delta _{j,2k},\ k,j\in \mathbb N$, $\delta _{j,i}$ is a Kronecker symbol (see \cite{KS-18, SSh}).

The natural symplectic structure $\omega$ determines the decomposition of the space $E$ into the direct sum of two subspaces $Q\oplus P$. The bases of this subspaces are respectively orthonormal systems $e_j=g_{2j-1},\, j\in \mathbb N $ and $f_k=g_{2k},\ k\in \mathbb N$. Then
\begin{equation}\label{simp}
\omega (e_j,e_i)=0,\quad \omega (f_i,f_j)=0\quad \forall \quad i,j\in {\mathbb N};\qquad \omega (e_j,f_k)=\delta _{jk},\, j,k\in \mathbb N
\end{equation}
(see \cite{KS-18}). In this case, the basis $\{ g_i,\ i\in \mathbb N \}=\{ e_j,f_k;\ j,\, k\in {\mathbb N}\}$ is called the symplectic basis of the space $E$, corresponding to the symplectic form $\omega$.

The linear operator ${\bf J}$ associated with the bilinear form $\omega $ is a non-degenerate skew-symmetric operator. Its action on the vectors of the symplectic basis
is given by the equalities $${\bf J}(e_j)=-f_j,\quad {\bf J}(f_k)=e_k,\quad j\in {\bf N},\ k\in {\bf N}. $$
In this case, $Q$ and $P$ are called a coordinate space and a momentum space, respectively, and it is assumed that $P$ is the conjugate of $Q$ (see \cite{Khrennikov, KS-18, SSh}).

A Hamiltonian system is a triple $(E,{\bf J},h),$ where $(E,{\bf J})$ is a Hilbert space with a symplectic structure, $h:E_2\to \mathbb R$ is define and continuously differentiable according to Gateaux on the vector subspace $E_2$ of the space $E$ a function called the Hamilton function.
A densely defined vector field ${\bf v}:\ E_2\to E$ is called Hamiltonian if there exists a Hamiltonian function $h:\ E_1\to \mathbb R$, $E_2\subset E_1\subset E$ such that
${\bf v}(z)={\bf J}Dh(z),\, z\in E_2$. Here the function $h$ is differentiable on the subspace $E_2\subset E_1$ densely embedded in the space $E$. $Dh$ is the differential of the function $h$. $\bf J$ is the linear operator associated with the bilinear form $\omega $ in the Hilbert space $E$.

An one-parameter group ${\Phi }_t,\, t\in {\mathbb R},$ of continuously differentiable transformations of the space $E_2$ is called a smooth Hamiltonian flow in the space $E_2$ generated by the Hamiltonian vector field ${\bf v}:\ E_2\to E$, if the equality $\frac{d }{dt}{\bf \Phi }_t(q,p)={\bf v}({\bf \Phi }_t(q,p))$ is satisfied $ ,\, (q,p)\in E_2$. If a Hamiltonian flow in the space $E_2$ admits an unique continuity extension from the space $E_2$ to the space $E$, then such a continuation of the flow is called a generalized Hamiltonian flow in the space $E$ generated by the Hamiltonian vector field $\bf v$ (Hamiltonian $ h$). Such an extension of a smooth Hamiltonian flow to a generalized flow exists if the smooth flow does not increase the norm of vectors in the space $E$.

\section{Measures invariant under symplectomorphisms}
Let us pose the problem of describing measures invariant under a certain group of Hamiltonian transformations on a real separable Hilbert space $E$, equipped with a translation invariant symplectic form $\omega$.
Let $E=Q\oplus P$ and ${\mathcal E}={\mathcal F}\bigcup {\mathcal G}$ be a symplectic basis of the form $\omega $ (see (\ref{simp})) .

\begin{definition}\label{def1}
The set $\Pi \subset E$ is called an absolutely measurable symplectic beam in the space
$E$ if there is such a symplectic
basis $\{ f_j,g_k,\ j\in {\mathbb N},\, k\in {\mathbb N}\}$ that is the set
$\Pi$ is given by the equality
\begin{equation}\label{Pi}
\Pi =\{ z\in E:\ ((z,f_i),(z,g_i))\in B_i,\, i\in {\mathbb N}\} ,
\end{equation}
where $B_i$ are Lebesgue measurable sets of the plane
${\mathbb R}^2$, satisfying the conditions $\sum\limits_{j=1}^{\infty }\max \{\ln (\lambda _2(B_j)),0\}<+\infty $ (here $\lambda _n$ is the Lebesgue measure in ${\mathbb R}^n$, $n\in \mathbb N$).
\end{definition}

We fix some symplectic basis
${\mathcal E}={\mathcal F}
\bigcup
{\mathcal G}$. Let ${\mathcal K}_{{\mathcal F},{\mathcal G}}(E)\equiv {\mathcal K}_{\mathcal E}(E)$ be the set of absolutely measurable symplectic bars having a form
(\ref{Pi}) in the selected basis
${\mathcal F}\bigcup {\mathcal G}$.

Let $\lambda _{{\mathcal F}, {\mathcal G}}:\ {\mathcal K}_{{\mathcal F}, {\mathcal G}}(E)\to [0,+\infty )$ is the set function defined by the equality
$$
\lambda _{{\mathcal F}, {\mathcal G}}(\Pi )=\prod\limits_{j=1}^{\infty }\lambda _2(B_j)=\exp(\sum\limits_{ j=1}^{\infty }\ln (\lambda _2(B_j)))
$$
subject to $\Pi \neq \oslash$; and $\lambda _{{\mathcal F},{\mathcal G}}(\Pi )=0$ in the case of $\Pi =\oslash $.
It is easy to see that if
$A,B\in {\mathcal K}_{\mathcal F},{\mathcal G}(E)$ in some ONB ${\mathcal F}\bigcup {\mathcal G}$, then $A \bigcap B\in {\mathcal K}_{{\mathcal F},{\mathcal F}}(G)$. In addition, the class of sets ${\mathcal K}_{{\mathcal F},{\mathcal G}}(E)$ and the set function $\lambda _{{\mathcal F}, {\mathcal G}}:\ {\mathcal K}_{{\mathcal F},{\mathcal G}}(E)\to [0,+\infty )$ are invariant under translation by any vector in the space $E$.
The set $\Pi \in {\mathcal K}_{{\mathcal F},{\mathcal G}}(E)$ of the form (\ref{Pi}) is denoted by the symbol $\times _{j=1}^{ \infty }B_j$.

Let $r_{{\mathcal F},{\mathcal G}}$ be the ring generated by the system of sets
${\mathcal K}_{{\mathcal F},{\mathcal G}}.$

\begin{lemma}\label{l3.1} \cite{GS22}
{\it Class $\Lambda$ of sets of the form $A=\Pi \backslash (\bigcup\limits^n_{i=1}\Pi _i),$ where $n\in {\mathbb N}_0$, $\ Pi ,\Pi _1,...,\Pi _n\in {\mathcal K}_{{\mathcal E},{\mathcal F}}$, is a semiring.}
\end{lemma}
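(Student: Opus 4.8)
The plan is to show that $\Lambda$ is closed under finite intersections and that the difference of two of its members is a finite disjoint union of members, which is the standard characterisation of a semiring (granting that $\oslash\in\Lambda$, obtained by taking $n=1$ and $\Pi_1=\Pi$). Throughout I would rely on the structural facts already recorded in the excerpt: that $\mathcal K_{\mathcal E,\mathcal F}$ is closed under finite intersections (stated after Definition~\ref{def1}), and that the $\Pi$'s of the form $\times_{j=1}^\infty B_j$ behave, set-theoretically, exactly like infinite ``boxes'' — intersections are computed coordinatewise, $\bigl(\times_j B_j\bigr)\cap\bigl(\times_j C_j\bigr)=\times_j(B_j\cap C_j)$, and the summability condition $\sum_j\max\{\ln\lambda_2(B_j\cap C_j),0\}<\infty$ is inherited because $\lambda_2(B_j\cap C_j)\le\lambda_2(B_j)$.

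First I would verify closure under intersection. Take $A=\Pi\setminus\bigcup_{i=1}^n\Pi_i$ and $A'=\Pi'\setminus\bigcup_{k=1}^m\Pi'_k$ with all the $\Pi$'s in $\mathcal K_{\mathcal E,\mathcal F}$. Then
\begin{equation}\label{eq:capLambda}
A\cap A'=(\Pi\cap\Pi')\setminus\Bigl(\bigcup_{i=1}^n(\Pi_i\cap\Pi\cap\Pi')\ \cup\ \bigcup_{k=1}^m(\Pi'_k\cap\Pi\cap\Pi')\Bigr).
\end{equation}
Since $\mathcal K_{\mathcal E,\mathcal F}$ is stable under (finite) intersection, every set appearing in \eqref{eq:capLambda} — namely $\Pi\cap\Pi'$ and each $\Pi_i\cap\Pi\cap\Pi'$, $\Pi'_k\cap\Pi\cap\Pi'$ — again lies in $\mathcal K_{\mathcal E,\mathcal F}$, and there are $n+m$ subtracted sets; hence $A\cap A'\in\Lambda$. (One should note that the subtracted sets may be taken to be subsets of $\Pi\cap\Pi'$ without loss of generality, which is what \eqref{eq:capLambda} already arranges.)

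Next, the difference. Consider $A=\Pi\setminus\bigcup_{i=1}^n\Pi_i$ and $A'=\Pi'\setminus\bigcup_{k=1}^m\Pi'_k$ and write
\begin{equation}\label{eq:diff}
A\setminus A'=A\cap\Bigl(E\setminus\Pi'\ \cup\ \bigcup_{k=1}^m\Pi'_k\Bigr)=\bigl(A\setminus\Pi'\bigr)\ \cup\ \bigcup_{k=1}^m\bigl(A\cap\Pi'_k\bigr).
\end{equation}
The term $A\setminus\Pi'=\Pi\setminus\bigl(\Pi'\cup\bigcup_{i=1}^n\Pi_i\bigr)$ is visibly a member of $\Lambda$ (it is $\Pi$ minus $n+1$ bars), and each $A\cap\Pi'_k=(\Pi\cap\Pi'_k)\setminus\bigcup_{i=1}^n(\Pi_i\cap\Pi'_k)$ is a member of $\Lambda$ by the intersection step. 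So $A\setminus A'$ is a finite union of elements of $\Lambda$; to finish I would disjointify this union in the usual way — replacing $C_1\cup\dots\cup C_r$ by $C_1\sqcup(C_2\setminus C_1)\sqcup(C_3\setminus(C_1\cup C_2))\sqcup\cdots$ — and observe that at each stage one is subtracting finitely many $\Lambda$-sets from a $\Lambda$-set, which by \eqref{eq:capLambda} stays in $\Lambda$; more directly, $C_\ell\setminus(C_1\cup\dots\cup C_{\ell-1})$ is a bar minus finitely many bars, hence in $\Lambda$. Thus $A\setminus A'$ is a finite disjoint union of members of $\Lambda$, completing the semiring axioms.

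The only genuinely delicate point — the ``main obstacle'' — is bookkeeping rather than substance: one must keep the number of subtracted bars finite at every step and make sure that, when forming $\Pi_i\cap\Pi\cap\Pi'$ and the like, the summability condition defining $\mathcal K_{\mathcal E,\mathcal F}$ is not lost. Both are automatic here because intersection only shrinks the coordinate sets $B_j$ (so $\ln\lambda_2$ can only decrease, preserving $\sum_j\max\{\ln\lambda_2(\cdot),0\}<\infty$) and because all unions in play are finite. I would therefore present \eqref{eq:capLambda}, \eqref{eq:diff}, and the disjointification as the three displayed computations and leave the coordinatewise verifications to the reader as routine, citing \cite{GS22} for the identity already used there.
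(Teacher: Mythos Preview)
The paper does not prove this lemma; it simply records the statement and cites \cite{GS22}, so there is no in-paper argument to compare against. Your verification is the standard one and is correct in substance. The only sentence worth tightening is in the disjointification step, where you first assert that ``subtracting finitely many $\Lambda$-sets from a $\Lambda$-set \dots\ stays in $\Lambda$'' and point to your intersection identity: that identity concerns $A\cap A'$, not differences, and read literally the claim is circular (closure of $\Lambda$ under differences is exactly what you are in the middle of establishing). Your follow-up observation --- that each $C_\ell\setminus(C_1\cup\dots\cup C_{\ell-1})$ is itself a single bar minus finitely many bars --- is the correct justification; it holds because every piece $C_j$ in your decomposition of $A\setminus A'$ already has $\bigcup_i\Pi_i$ removed, so subtracting $C_j$ from $C_\ell$ amounts to subtracting only the outer bar of $C_j$. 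Keep that direct argument and drop the appeal to the intersection formula.
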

\begin{theorem}\label{th3.6.} \cite{GS22} {\it The set function $\lambda :\ {\mathcal K}_{{\mathcal F},{\mathcal G}}(E) \to [0,+\infty )$ is additive. The additive function of the set $\lambda :\ {\mathcal K}_{{\mathcal E},{\mathcal F}}(E)\to [0,+\infty )$ admits a unique additive extension to the ring $r_{{ \mathcal F},{\mathcal G}}$ generated by the semiring $\Lambda $.}
\end{theorem}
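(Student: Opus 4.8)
The statement has two parts: (1) additivity of $\lambda$ on $\mathcal{K}_{\mathcal{F},\mathcal{G}}(E)$, and (2) the existence and uniqueness of an additive extension to the ring $r_{\mathcal{F},\mathcal{G}}$. Since this is cited from \cite{GS22}, I would reconstruct the argument as follows. For part (1), suppose $\Pi = \bigsqcup_{m=1}^{M} \Pi^{(m)}$ is a finite disjoint decomposition with all sets in $\mathcal{K}_{\mathcal{F},\mathcal{G}}(E)$. The key point is that membership in $\mathcal{K}_{\mathcal{F},\mathcal{G}}(E)$ is defined relative to a *fixed* symplectic basis, so each $\Pi^{(m)} = \times_{j=1}^{\infty} B_j^{(m)}$ and $\Pi = \times_{j=1}^{\infty} B_j$ with $B_j^{(m)}, B_j \subset \mathbb{R}^2$ Lebesgue measurable. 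I would first reduce to the case where the decomposition is "finitely generated": there is an $N$ such that for $j > N$ all the $B_j^{(m)}$ coincide with $B_j$ (this uses that a genuine disjointness of infinite products forces the factors to eventually agree, up to null sets — here I would invoke the convergence condition $\sum_j \max\{\ln \lambda_2(B_j),0\} < \infty$ to control the tail). Then the problem collapses to finite-dimensional Lebesgue measure on $\mathbb{R}^{2N}$ times a common infinite tail factor $c = \prod_{j>N} \lambda_2(B_j)$, and additivity follows from additivity of $\lambda_{2N}$ together with $\lambda_{\mathcal{F},\mathcal{G}}(\times_j B_j) = \lambda_{2N}(\prod_{j\le N} B_j)\cdot c$.

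**Extension to the ring.** For part (2), by Lemma~\ref{l3.1} the class $\Lambda$ of sets $\Pi \setminus \bigcup_{i=1}^n \Pi_i$ is a semiring, and a standard measure-theoretic fact (Carathéodory's construction of the generated ring) says that every element of the ring $r_{\mathcal{F},\mathcal{G}}$ generated by a semiring is a finite disjoint union of semiring elements, and an additive set function on a semiring extends uniquely and additively to that ring. So the real work is to show that $\lambda$, initially defined on $\mathcal{K}_{\mathcal{F},\mathcal{G}}$, extends additively to the semiring $\Lambda$ — i.e., that $\lambda(\Pi \setminus \bigcup_i \Pi_i)$ is well-defined (independent of the representation) and additive. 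The natural definition is via inclusion–exclusion: $\lambda(\Pi \setminus \bigcup_{i=1}^n \Pi_i) = \sum_{S \subseteq \{1,\dots,n\}} (-1)^{|S|} \lambda(\Pi \cap \bigcap_{i\in S}\Pi_i)$, which makes sense because finite intersections of sets in $\mathcal{K}_{\mathcal{F},\mathcal{G}}$ lie in $\mathcal{K}_{\mathcal{F},\mathcal{G}}$ (intersect the $B_i$-factors). I would verify non-negativity and additivity of this formula by again reducing to a common finite-dimensional block $\mathbb{R}^{2N}$ with a shared tail constant, where everything reduces to the corresponding statements for Lebesgue measure $\lambda_{2N}$, which are classical.

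**The main obstacle.** The one genuinely delicate step is the reduction from an infinite-product decomposition to a finite-dimensional one. In finite dimensions a disjoint union of boxes is handled by a grid refinement; in the infinite-dimensional product one must argue that if $\times_j B_j = \bigsqcup_m \times_j B_j^{(m)}$ then, modulo $\lambda_2$-null modifications, only finitely many coordinates actually "do work," the rest being forced to agree with $B_j$. Making this rigorous requires a Fubini-type argument on the finitely additive product together with the summability hypothesis in Definition~\ref{def1} that guarantees the infinite products $\prod_j \lambda_2(B_j)$ converge (in $[0,+\infty)$) and behave multiplicatively under splitting one coordinate. Once this structural lemma is in place, both additivity and the ring extension are routine consequences of the finite-dimensional theory, so I would budget most of the proof's length to that lemma and state the rest briefly. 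I would also note explicitly that uniqueness of the extension is automatic: any two additive extensions to $r_{\mathcal{F},\mathcal{G}}$ agree on $\Lambda$ (by the forced inclusion–exclusion formula) and hence on all finite disjoint unions of $\Lambda$-sets, which exhaust $r_{\mathcal{F},\mathcal{G}}$.
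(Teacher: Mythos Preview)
The paper does not prove Theorem~\ref{th3.6.} here; it is cited from \cite{GS22}. However, the scheme of the \cite{GS22} argument is recapitulated in the proof of Theorem~\ref{Theorem 4} for the analogous measure $\lambda_{\beta}$, and that sketch is the natural comparison point.

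Your plan is sound but organized differently from the paper's. For the extension to the ring you define $\lambda$ on the whole semiring $\Lambda$ in one stroke via the inclusion--exclusion formula and then invoke the standard semiring-to-ring extension. The route outlined in the proof of Theorem~\ref{Theorem 4} (following \cite{GS22}) instead builds the extension inductively through a ladder of auxiliary classes: setting $\mathbb{V}_j$ to be unions of $j$ bars and $\Lambda_j$ to be a bar minus a union of $j$ bars, one extends $\lambda$ step by step $\mathbb{V}_1\to\Lambda_1\to\mathbb{V}_2\to\Lambda_2\to\cdots$, checking at each stage that the value does not depend on the chosen representation. The two routes are equivalent in content---your inclusion--exclusion identity is exactly the closed form of the paper's induction---but the inductive organization localizes the well-definedness checks (which is where the substance lies), while your formulation is more compact and makes the link to finite-dimensional Lebesgue theory explicit.

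One small correction to your reasoning for part~(1). The reduction ``a finite disjoint partition of an infinite product by products can involve only finitely many coordinates'' is a purely set-theoretic fact about finite partitions by product sets (proved by a coordinate-mixing argument and induction on the number $M$ of pieces); it is not a consequence of the summability condition $\sum_j\ln_+(\lambda_2(B_j))<\infty$ in Definition~\ref{def1}. That summability condition is used only to guarantee that the infinite products $\prod_j\lambda_2(B_j)$ converge in $[0,+\infty)$ and factor as (finite block)$\times$(tail), not to force the structural finiteness of the partition. Adjust the justification of that step accordingly; the rest of your plan goes through.
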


Completion of a finitely additive measure
$\lambda :\ r_{{\mathcal F},{\mathcal G}} \to [0,+\infty )$ is the complete measure of $\lambda _{{\mathcal E},{\mathcal F}} :\ {\mathcal R}_{{\mathcal E},{\mathcal F}}\to [0,+\infty )$. The ring $r$ defines the ring ${\mathcal R}_{{\mathcal E},{\mathcal F}}$ as follows. The internal ${\underline \lambda }$ and external $\overline \lambda $ measures of arbitrary sets are determined by the measure $\lambda :\ r_{{\mathcal F},{\mathcal G}} \to [0,+\infty ) $ on the family of all subsets of the space $E$. Then ${\mathcal R}_{{\mathcal E},{\mathcal F}}=\{ A\subset E:\ {\underline \lambda }(A)={\overline \lambda }(A)\ in {\mathbb R}\}$.

\begin{theorem}\label{Th3.2.} \cite{GS22}
{\it
The measure $\lambda _{{\mathcal F},{\mathcal G}}:\ {\mathcal R}_{{\mathcal F},{\mathcal G}}\to [0,+\infty )$ is invariant relative to any symplectomorphism
$\Phi :\ E\to E$ such that for every
$k\in \mathbb N$ mapping ${\bf P}_{E_k}\Phi :\ E\to E_k$ does not depend on the value of the projection
${\bf P}_{E_k^{\bot }}x$ and the map ${\bf P}_{E_k}\Phi :\ E_k\to E_k$ is continuously differentiable on the space $E_k$.}
\end{theorem}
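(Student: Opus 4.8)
The plan is to reduce the claim to the classical fact that a $C^1$-diffeomorphism of the plane whose Jacobian determinant has modulus one preserves planar Lebesgue measure, and then to propagate this invariance through the construction of $\lambda_{\mathcal{F},\mathcal{G}}$. Fix the symplectic basis $\{f_j,g_k:j,k\in\mathbb{N}\}$ underlying $\mathcal{K}_{\mathcal{F},\mathcal{G}}(E)$ and, for each $j$, set $E_j=\overline{\mathrm{span}}\{f_j,g_j\}$, so that $E=\bigoplus_j E_j$ is an orthogonal decomposition into pairwise $\omega$-orthogonal two-dimensional symplectic planes; identifying $E_j$ with $\mathbb{R}^2$ through the coordinates $x\mapsto((x,f_j),(x,g_j))$, a beam $\Pi=\times_{j=1}^{\infty}B_j\in\mathcal{K}_{\mathcal{F},\mathcal{G}}(E)$ is exactly the product of planar Lebesgue sets $B_j$ with $\sum_j\max\{\ln\lambda_2(B_j),0\}<+\infty$. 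The hypothesis that $\mathbf{P}_{E_j}\Phi$ is independent of $\mathbf{P}_{E_j^{\perp}}x$ means precisely that there is a map $\Phi_j:E_j\to E_j$ with $\mathbf{P}_{E_j}\Phi(x)=\Phi_j(\mathbf{P}_{E_j}x)$; hence $\Phi$ acts coordinatewise, $\Phi(x)=\sum_j\Phi_j(\mathbf{P}_{E_j}x)$, its derivative $D\Phi(x)$ is block diagonal with $j$-th block $D\Phi_j(\mathbf{P}_{E_j}x)$, and the same structure is shared by $\Phi^{-1}$ with blocks $\Phi_j^{-1}$. Testing the symplectic relation $\omega(D\Phi(x)\xi,D\Phi(x)\eta)=\omega(\xi,\eta)$ on $\xi,\eta\in E_j$ shows $\Phi_j$ is a symplectomorphism of $(E_j,\omega|_{E_j})\cong(\mathbb{R}^2,dp\wedge dq)$; by hypothesis it is of class $C^1$, so $\det D\Phi_j\equiv 1$ and $\Phi_j$ is a $C^1$-diffeomorphism of $\mathbb{R}^2$.

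The first substantive step is that each $\Phi_j$ preserves $\lambda_2$: for every Lebesgue set $B\subset\mathbb{R}^2$ the image $\Phi_j(B)$ is again Lebesgue measurable (a $C^1$ map is locally Lipschitz, hence carries $F_{\sigma}$ sets to $F_{\sigma}$ sets and null sets to null sets), and the change-of-variables formula together with $|\det D\Phi_j|\equiv 1$ yields $\lambda_2(\Phi_j(B))=\lambda_2(B)$. Consequently, for $\Pi=\times_j B_j\in\mathcal{K}_{\mathcal{F},\mathcal{G}}(E)$ one has $\Phi(\Pi)=\times_j\Phi_j(B_j)$, each window $\Phi_j(B_j)$ is a planar Lebesgue set of the same area as $B_j$, so the logarithmic summability condition of Definition \ref{def1} is inherited, $\Phi(\Pi)\in\mathcal{K}_{\mathcal{F},\mathcal{G}}(E)$, and $\lambda_{\mathcal{F},\mathcal{G}}(\Phi(\Pi))=\prod_j\lambda_2(\Phi_j(B_j))=\prod_j\lambda_2(B_j)=\lambda_{\mathcal{F},\mathcal{G}}(\Pi)$. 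Applying the same argument to $\Phi^{-1}$, the map $\Phi$ is a bijection of $\mathcal{K}_{\mathcal{F},\mathcal{G}}(E)$ onto itself preserving $\lambda_{\mathcal{F},\mathcal{G}}$; it is essential here that $\Phi$ does not mix the planes $E_j$, so that a beam written in the fixed basis is carried to a beam in the same basis.

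It remains to transport invariance along the extension procedure. As a bijection of $E$ satisfying $\Phi(A\cap B)=\Phi(A)\cap\Phi(B)$ and $\Phi(A\setminus B)=\Phi(A)\setminus\Phi(B)$, $\Phi$ carries the semiring $\Lambda$ of Lemma \ref{l3.1} onto itself, and hence the ring $r_{\mathcal{F},\mathcal{G}}$ generated by it onto itself. Then $A\mapsto\lambda_{\mathcal{F},\mathcal{G}}(\Phi(A))$ is a finitely additive extension to $r_{\mathcal{F},\mathcal{G}}$ of the restriction of $\lambda_{\mathcal{F},\mathcal{G}}$ to $\mathcal{K}_{\mathcal{F},\mathcal{G}}(E)$, so by the uniqueness in Theorem \ref{th3.6.} it coincides with $\lambda_{\mathcal{F},\mathcal{G}}$; thus $\lambda_{\mathcal{F},\mathcal{G}}(\Phi(A))=\lambda_{\mathcal{F},\mathcal{G}}(A)$ for every $A\in r_{\mathcal{F},\mathcal{G}}$. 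Finally, since $\Phi$ is a $\lambda_{\mathcal{F},\mathcal{G}}$-preserving bijection of $r_{\mathcal{F},\mathcal{G}}$, it preserves the inner and outer measures determined by $\lambda_{\mathcal{F},\mathcal{G}}$ on $r_{\mathcal{F},\mathcal{G}}$: $\underline{\lambda}(\Phi(A))=\underline{\lambda}(A)$ and $\overline{\lambda}(\Phi(A))=\overline{\lambda}(A)$ for every $A\subset E$. Hence $A\in\mathcal{R}_{\mathcal{F},\mathcal{G}}$ if and only if $\Phi(A)\in\mathcal{R}_{\mathcal{F},\mathcal{G}}$, and on this class $\lambda_{\mathcal{F},\mathcal{G}}(\Phi(A))=\lambda_{\mathcal{F},\mathcal{G}}(A)$, which is the assertion.

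I expect the only delicate point to be the verification, in the second paragraph, that $\Phi$ really maps the admissible class into itself: one has to confirm that the coordinatewise images stay Lebesgue measurable, that the defining logarithmic series — and with it the infinite product defining $\lambda_{\mathcal{F},\mathcal{G}}$ — is preserved, and one must account for degenerate windows with $\lambda_2(B_j)\in\{0,+\infty\}$. Once membership in $\mathcal{K}_{\mathcal{F},\mathcal{G}}(E)$ and invariance on it are in hand, the passage to the ring and to its completion is routine manipulation of a measure-preserving bijection and appeals only to Lemma \ref{l3.1} and Theorem \ref{th3.6.}.
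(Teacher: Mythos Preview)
The paper does not supply its own proof of Theorem~\ref{Th3.2.}: the statement is quoted from \cite{GS22} and left without argument here, so there is nothing in the present text to compare your proposal against line by line.

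That said, your argument is correct and is exactly the natural one. The reduction to the planar blocks $\Phi_j:E_j\to E_j$, the observation that a $C^1$ symplectomorphism of $(\mathbb{R}^2,dp\wedge dq)$ has Jacobian determinant $1$ and hence preserves $\lambda_2$, the conclusion that $\Phi$ maps $\mathcal{K}_{\mathcal{F},\mathcal{G}}(E)$ bijectively onto itself with $\lambda_{\mathcal{F},\mathcal{G}}$ unchanged, and the propagation of invariance first to $r_{\mathcal{F},\mathcal{G}}$ via the uniqueness clause of Theorem~\ref{th3.6.} and then to the completion $\mathcal{R}_{\mathcal{F},\mathcal{G}}$ via inner and outer measures --- all of this is sound and matches the spirit of the construction in Section~3. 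Two minor remarks on the caveats you flag at the end: the case $\lambda_2(B_j)=+\infty$ is in fact excluded by Definition~\ref{def1} (a single infinite term makes the logarithmic sum diverge), so only the null-window case $\lambda_2(B_j)=0$ needs separate mention, and that one is trivial since both $\Pi$ and $\Phi(\Pi)$ then have measure zero; and the global invertibility of each $\Phi_j$ that you use implicitly follows from $\Phi$ being a bijection of $E$ together with the coordinatewise structure.
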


\begin{theorem}\label{r-space} For each $p\in [1,+\infty ]$
the space $L_p(E,{\mathcal R}_{{\mathcal E},{\mathcal F}},\lambda _{{\mathcal E},{\mathcal F}},{\mathbb C})$, defined as the completion in the $L_p$-norm of the space
$S(E,{\mathcal R}_{{\mathcal E},{\mathcal F}},\lambda _{{\mathcal E},{\mathcal F}},
{\mathbb C})$ of equivalence classes of simple functions is a non-separable Banach space. In this case, $L_p^*(E,{\mathcal R}_{{\mathcal E},{\mathcal F}},\lambda _{{\mathcal E},{\mathcal F}},{\mathbb C })=L_{q}(E,{\mathcal R}_{{\mathcal E},{\mathcal F}},\lambda _{{\mathcal E},{\mathcal F}},{\mathbb C})$ for $p\in [1,+\infty )$, where $q={p\over {p-1}}$.

\end{theorem}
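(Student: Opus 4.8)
I will treat the three assertions separately; only the last needs real work. That $L_p$ is a Banach space is essentially definitional: on the space of simple functions $S(E,\mathcal R_{\mathcal E,\mathcal F},\lambda_{\mathcal E,\mathcal F},\mathbb C)$ the quantity $\bigl(\int_E|f|^p\,d\lambda_{\mathcal E,\mathcal F}\bigr)^{1/p}$ for $p<\infty$ (the essential supremum for $p=\infty$) is finite, since a simple function is a finite combination of indicators of finite‑measure sets; it is subadditive by the ordinary Minkowski inequality for finite sums together with the additivity of $\lambda_{\mathcal E,\mathcal F}$ from Theorem~\ref{th3.6.}; and it vanishes precisely on the $\lambda_{\mathcal E,\mathcal F}$‑null simple functions, so it descends to a norm on equivalence classes. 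Since $L_p$ is by definition the completion of this normed space, it is automatically Banach.

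\emph{For non-separability} I would exhibit an uncountable, uniformly separated subset. In the fixed symplectic basis, for $T\subseteq\mathbb N$ put $\Pi_T=\times_{j=1}^{\infty}B_j^{(T)}$ with $B_j^{(T)}=[0,1]\times[0,1]$ if $j\notin T$ and $B_j^{(T)}=[-\tfrac12,\tfrac12]\times[0,1]$ if $j\in T$. Each block is Lebesgue measurable, of area $1$, and contains the origin of $\mathbb R^2$; hence $\Pi_T\in\mathcal K_{\mathcal E,\mathcal F}(E)$ (the summability condition of Definition~\ref{def1} is trivial), $0\in\Pi_T$ so $\Pi_T\neq\emptyset$, and $\lambda_{\mathcal E,\mathcal F}(\Pi_T)=\prod_j 1=1$. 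If $T\neq T'$, then for $j_0\in T\triangle T'$ one has $\lambda_2(B_{j_0}^{(T)}\cap B_{j_0}^{(T')})=\tfrac12$, so $\lambda_{\mathcal E,\mathcal F}(\Pi_T\cap\Pi_{T'})=\prod_j\lambda_2(B_j^{(T)}\cap B_j^{(T')})\le\tfrac12$, whence by additivity
\[
\|\chi_{\Pi_T}-\chi_{\Pi_{T'}}\|_p^{\,p}=\lambda_{\mathcal E,\mathcal F}(\Pi_T\triangle\Pi_{T'})=\lambda_{\mathcal E,\mathcal F}(\Pi_T)+\lambda_{\mathcal E,\mathcal F}(\Pi_{T'})-2\lambda_{\mathcal E,\mathcal F}(\Pi_T\cap\Pi_{T'})\ge 1 ,
\]
and for $p=\infty$ the symmetric difference has positive measure, so the distance is again $1$. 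Thus $\{\chi_{\Pi_T}:T\subseteq\mathbb N\}$ has cardinality $2^{\aleph_0}$ and is $1$-separated in $L_p$, so no countable set is dense.

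\emph{For the duality}, the embedding $L_q\hookrightarrow L_p^*$ via $g\mapsto\Lambda_g$, $\Lambda_g(f)=\int_E fg\,d\lambda_{\mathcal E,\mathcal F}$, is the easy half: $|\Lambda_g(f)|\le\|f\|_p\|g\|_q$ by Hölder's inequality (finite‑sum Hölder plus additivity on simple functions, then density), and $\|\Lambda_g\|=\|g\|_q$ by testing against simple approximations to $f=\overline{\mathrm{sgn}\,g}\,|g|^{q-1}\|g\|_q^{-q/p}$ for $1<p<\infty$, or against normalized indicators of $\{|g|>\|g\|_\infty-\varepsilon\}$ for $p=1$; so the embedding is isometric. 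The content of the theorem is its surjectivity, and here the only finitely additive nature of $\lambda_{\mathcal E,\mathcal F}$ is the obstacle: the classical derivation of $L_p^*=L_q$ passes through the Radon--Nikodym theorem, which is unavailable for general finitely additive measures (and for $p=\infty$ the conclusion genuinely fails, hence the exclusion). The cleanest route around this that I see is to note that $L_p(E,\mathcal R_{\mathcal E,\mathcal F},\lambda_{\mathcal E,\mathcal F},\mathbb C)$, ordered by a.e.\ inequality of simple functions, is an abstract $L_p$‑space — a Banach lattice with $\|u+v\|_p^p=\|u\|_p^p+\|v\|_p^p$ for disjoint $u,v\ge0$, a property holding for disjointly supported simple functions and inherited by the completion — so by the Kakutani--Bohnenblust--Nakano representation theorem it is lattice‑isometric to $L_p$ of a genuine localizable measure space, where $L_p^*=L_q$ is classical; since the isometry carries simple functions to simple functions, the transported functional is integration against an element of $L_q(E,\mathcal R_{\mathcal E,\mathcal F},\lambda_{\mathcal E,\mathcal F},\mathbb C)$. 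Alternatively, for $1<p<\infty$ one may deduce uniform convexity (hence reflexivity) of $L_p$ from Clarkson's inequalities for simple functions, conclude that the closed isometric copy of $L_q$ exhausts $L_p^*$ (a functional annihilating it would integrate to zero over every finite‑measure set of $\mathcal R_{\mathcal E,\mathcal F}$, hence vanish on level sets of approximating simple functions), and treat $p=1$ by a direct Radon--Nikodym argument exploiting that $\lambda_{\mathcal E,\mathcal F}$ restricts to an ordinary Lebesgue‑type measure on each finite block of the symplectic basis. I expect this surjectivity step to be the main difficulty.
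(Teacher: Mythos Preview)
Your treatment of the Banach-space property and of non-separability matches the paper's: it too exhibits a continuum of pairwise separated indicator functions of symplectic bars, taking $B_i=[\sigma(i),\sigma(i)+1)\times[0,1)$ with $\sigma:\mathbb N\to\{-1,0\}$ (so the bars are in fact disjoint whenever $\sigma\neq\sigma'$, giving $L_p$-distance $2^{1/p}$; your overlapping variant with distance $\ge 1$ works equally well).

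The real divergence is in the duality. The paper's entire argument for $L_p^*=L_q$ is the observation that $n_p(f)=\sup\bigl\{\,\bigl|\int_E f\bar g\,d\lambda_{\mathcal E,\mathcal F}\bigr|:\ g\in S,\ n_q(g)\le 1\bigr\}$ holds for simple $f$, together with density of simple functions in both spaces. As you correctly diagnose, this yields only the isometric embedding $L_q\hookrightarrow L_p^*$, not surjectivity; the paper is either tacitly deferring to its Dunford--Schwartz reference or leaving a gap. Your routes --- the Kakutani--Bohnenblust--Nakano representation of abstract $L_p$-spaces, or uniform convexity via Clarkson for $1<p<\infty$ followed by the annihilator argument --- are legitimate ways to close it, and on this point your treatment is more complete than the paper's. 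Your block-by-block Radon--Nikodym suggestion for $p=1$ is the weakest link (it is not clear how to glue the local densities into a global element of $L_\infty$ without further structure); the abstract $L_p$-space route, or the finitely additive Radon--Nikodym theorem with bounded density as developed in Dunford--Schwartz, is the cleaner way to handle that endpoint.
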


\begin{proof}
Let $p\in [1,+\infty ]$. Let us define the space $L_p(E,{\mathcal R}_{{\mathcal E},{\mathcal F}},\lambda _{{\mathcal E},{\mathcal F}},{\mathbb C}) $ as follows (see \cite{DSh}). Along the ring ${\mathcal R}_{{\mathcal E},{\mathcal F}}$
let us define the linear space of simple functions as a linear space over the field $\mathbb C$ of linear combinations of indicator functions of disjoint sets from the ring
$$
S(E,{\mathcal R}_{{\mathcal E},{\mathcal F}},{\mathbb C})=
\{ \sum\limits_{k=1}^m\alpha _k\chi _{B_k},\ m\in {\mathbb N},\, \alpha _k\in {\mathbb C},\, B_k\ in {\mathcal R}_{{\mathcal E},{\mathcal F}},\, \forall \, k=1,...,m ,\ B_j\bigcap B_k=\emptyset \ {\rm if }\j\neq k\}.
$$
On the space of simple functions we define $S(E,{\mathcal R}_{{\mathcal E},\lambda _{{\mathcal E},{\mathcal F}},{\mathcal F}},{\mathbb C})$ the functional \begin{equation}\label{no5}
n_p(f)=\left(\int\limits_E|f|^pd\lambda _{{\mathcal E},{\mathcal F}}\right)^{1\over p},\ p\in [1 ,+\infty );\quad n_{\infty }(f)=\lim\limits_{p\to +\infty }\left(\int\limits_E|f|^pd\lambda _{{\mathcal E} ,{\mathcal F}}\right)^{1\over p},
\end{equation}
where $\int\limits_E|\sum\limits_{k=1}^m\alpha _k\chi _{B_k}|^pd\lambda _{{\mathcal E},{\mathcal F}}=\sum\ limits_{k=1}^m|\alpha _k|^p \lambda _{{\mathcal E},{\mathcal F}}({B_k})$ for each simple function of the form $f=\sum\limits_{ k=1}^m\alpha _k\chi _{B_k}$.
The functional $n_p :\ S(E,{\mathcal R}_{{\mathcal E},{\mathcal F}},{\mathbb C})\to \mathbb R$, as can be easily verified, is a seminorm on the space $S(E,{\mathcal R}_{{\mathcal E},{\mathcal F}},{\mathbb C})$.

Using the non-negative finitely additive measure $\lambda _{{\mathcal E},{\mathcal F}}$ we introduce on the space $S(E,{\mathcal R}_{{\mathcal E},{\mathcal F }},{\mathbb C})$ the equivalence relation $f\sim g\ \Leftrightarrow \ \lambda _{{\mathcal E},{\mathcal F}}(\{ x\in E:\ f(x) \neq g(x)\})=0$. Then the set $S_0(E,{\mathcal R}_{{\mathcal E},\lambda _{{\mathcal E},{\mathcal F}},{\mathcal F}},{\mathbb C}) =\{ f\in S(E,{\mathcal R}_{{\mathcal E},{\mathcal F}},{\mathbb C}):\ f\sim 0\}$ is a subspace in a linear space of simple functions. Let us introduce the space of equivalence classes of simple functions
$$
S(E,{\mathcal R}_{{\mathcal E},{\mathcal F}},\lambda _{{\mathcal E},{\mathcal F}},{\mathbb C})=S( E,{\mathcal R}_{{\mathcal E},{\mathcal F}},{\mathbb C})/ S_0(E,{\mathcal R}_{{\mathcal E},{\mathcal F }},\lambda _{{\mathcal E},{\mathcal F}},{\mathbb C}).
$$
If $f,g\in S(E,{\mathcal R}_{{\mathcal E},{\mathcal F}},{\mathbb C})$ and $f\sim g$, then $n _p (f)=n_p(g)$. Therefore the functional $n_p:\\ S(E,{\mathcal R}_{{\mathcal E},{\mathcal F}},\lambda _{{\mathcal E},{\mathcal F}},{\mathbb C})\to \mathbb R$ corresponding to each element $f\in S(E,{\mathcal R}_{{\mathcal E},{\mathcal F}},\lambda _{{\mathcal E},{\mathcal F}},{ \mathbb C})$. The value of the functional (\ref{no5}) on one of the representatives of the equivalence class $f$ is well defined and is a norm on the space $S(E,{\mathcal R}_{{\mathcal E}, {\mathcal F}},\lambda _{{\mathcal E},{\mathcal F}},{\mathbb C})$. Completion $L_p(E,{\mathcal R}_{{\mathcal E},{\mathcal F}},\lambda _{{\mathcal E},{\mathcal F}},{\mathbb C})$ the linear normed space $(S(E,{\mathcal R}_{{\mathcal E},{\mathcal F}},\lambda _{{\mathcal E},{\mathcal F}},{\mathbb C }), n_p)$ is a Banach space in which the space of equivalence classes of simple functions is dense everywhere.

For each $p\in [1,+\infty]$ the space $L_p(E,{\mathcal R}_{{\mathcal E},{\mathcal F}},\lambda _{{\mathcal E}, {\mathcal F}},{\mathbb C})$ contains a continuum system of elements (these are indicator functions of sets $\Pi _{\sigma}$ of the form (\ref{Pi}) in which $B_i=[\sigma (i ),\sigma (i)+1)\times [0,1)$. Here $\sigma :\ {\mathbb N}\to \{-1,0\}$), distance according to the $L_p$-norm between which there is at least one, which ensures the nonseparability of the Banach space $L_p(E,{\mathcal R}_{{\mathcal E},{\mathcal F}},\lambda _{{\mathcal E},{\mathcal F }},{\mathbb C})$.

Let $p\in [1,+\infty )$ and $q={p\over {p-1}}$, $f\in S(E,{\mathcal R}_{{\mathcal E}, {\mathcal F}},\lambda _{{\mathcal E},{\mathcal F}},{\mathbb C})$. Then $$n_p(f)=\sup\{ |\int\limits_{E}f(x){\overline {g(x)}}d\lambda _{{\mathcal E},{\mathcal F} }(x)| :\ {g\in S(E,{\mathcal R}_{{\mathcal E},{\mathcal F}},\lambda _{{\mathcal E},{\mathcal F}},{\mathbb C}),\ n_q(g)\leq 1} \}.$$
Since the space of equivalence classes of simple functions is dense in the spaces $L_p$ and $L_q$, then $L_p^*=L_q$ and for each function $f\in L_p(E,{\mathcal R}_{{\mathcal E}, {\mathcal F}},\lambda _{{\mathcal E},{\mathcal F}},{\mathbb C})$ the equality is true
$n_p(f)=\sup\{ |\int\limits_{E}f(x){\overline {g(x)}}d\lambda _{{\mathcal E},{\mathcal F}}(x)| :\ {g\in L_q(E,{\mathcal R}_{{\mathcal E},{\mathcal F}},\lambda _{{\mathcal E},{\mathcal F}},{\mathbb C}),\ n_q(g)\leq 1}\}.$
\end{proof}

{\bf Note}.
For $p=2$ the measure $\lambda _{{\mathcal E},{\mathcal F}}:\ {\mathcal R}_{{\mathcal E},{\mathcal F}}\to [0, +\infty )$ defines the Hilbert space ${\mathcal H}=L_2(E,{\mathcal R}_{{\mathcal E},{\mathcal F}},\lambda _{{\mathcal E},{ \mathcal F}},{\mathbb C})$ as a completion in the Euclidean norm $n_2$ of the  space
$S(E,{\mathcal R}_{{\mathcal E},{\mathcal F}},\lambda _{{\mathcal E},{\mathcal F}},{\mathbb C})$ of equivalence classes  of simple functions.

\section{Invariance of the symplectic measure with respect to Hamiltonian flows}
Let $h:E\to R$ be a non-degenerate quadratic Hamiltonian function on the Euclidean space $E$. The symmetric quadratic function on $E$ generated by a quadratic form $h$ has a canonical basis ${\mathcal E}={\mathcal F}\bigcup \mathcal G$, in which the quadratic form has a diagonal form. Let us also assume that the basis $\mathcal E$ is the canonical basis for the symplectic form $\bf J$ on the space $E$.
The flow ${\bf \Phi }_t,\, t\in \mathbb R$, given by the quadratic Hamiltonian $h$,
defines the one-parameter group
$$
{\bf U}_{{\bf \Phi }_t}u(x)=u({\bf \Phi }_{-t}(x)),\quad x\in E,\quad u\in S(E,{\mathcal R}_{\mathcal E},{\mathbb C}),\quad t\in {\mathbb R},
$$
This is linear isometries of the space of simple functions $S(E,{\mathcal R}_{\mathcal E},{\mathbb C})$ onto itself. The isometry group ${\bf U}_{{\bf \Phi }_t},\, t\in \mathbb R$ defined on a dense linear subspace $S(E,{\mathcal R}_{\mathcal E},{\mathbb C})$ in the space ${\mathcal H}_{\mathcal E}$, can be uniquely extended by continuity to a unitary group in the space ${\mathcal H}_{\mathcal E}$. Its acting according to the rule
\begin{equation*}
{\bf U}_{{\bf \Phi }_t }u(x)=u({\bf \Phi }_{-t}(x)),\quad t\in {\mathbb R},\ quad u\in {\mathcal H}_{{\mathcal F}, {\mathcal G}},\quad x\in E,
\end{equation*}
and called the Koopman representation of the Hamiltonian flow $\bf \Phi$.

Let us give examples of Hamiltonian flows that preserve the measure $\lambda _{\mathcal F,G}$.

{\bf Example 1.}
{\it A countable set of non-interacting two-dimensional Hamiltonian systems} is represented by the Hamiltonian function
\begin{equation}\label{BK}
\tilde {\mathbb H}(p,q)=\sum\limits_{k\in {\mathbb N}} \phi _k(p_k,q_k),\quad (p,q)\in E.
\end{equation}
Here $\{\phi _k\}$ is a sequence of functions
$\phi _k:\ E_k\to \mathbb R$, which are continuously differentiable for each
$k\in \mathbb N$ and satisfy the condition
\begin{equation*}
\sum\limits_{k=1}^{\infty }M_k< \infty , \end{equation*}
where $M_k=\sup\limits_{(p,q)\in {\mathbb R}^2}\left(|\phi _k(p,q)|^2
+|{{\partial }\over {\partial p_k}}\phi _k(p,q)|^2+|{{\partial }\over {\partial q_k}}\phi _k(p,q)| ^2\right)^{1\over 2}
$.

Under  assumptions made, the Hamilton function (\ref{BK}) is a continuously Frechet differentiable function on the space $E$. Consequently, the Hamiltonian function (\ref{BK}) generates a Hamiltonian flow $\Phi _{\tilde {\mathbb H}}$ in the space $E$. According to Theorem \ref{Th3.2.}, the flow $\Phi _{\tilde {\mathbb H}}$ preserves the measure $\lambda _{{\mathcal F},{\mathcal G}}$.

{\bf Example 2}.
If the Hamilton function ${\tilde {\mathbb H}}$ is a continuous linear functional on the space $E$, i.e. $ {\tilde {\mathbb H}}(z)=(h,{\bf J}z)_E$ for some $h\in E$, then
\begin{equation*}
\Phi _{\tilde {\mathbb H}}(t)z=z+th,\ z\in E,\, t\in \mathbb R.
\end{equation*}

{\bf Example 3}.
{\it Harmonic oscillator}.
Let
$\bf H$ is a linear self-adjoint operator in the space $H$ with a discrete spectrum
$\{a _k\}$ and ONB from eigenvectors
$\{ h_k\}={\mathcal H}$. Let ${\bf R}:\ H\to E$ be a reification of the space $H$. Then
\begin{equation*}
{\mathbb H}=\sum\limits_{k\in \mathbb N}a _k(p_k^2+q_k^2)=\sum\limits_{k\in \mathbb N}a _k|z_k|^2,
\quad (q,p)\in E_1=\{ (q,p)\in E:\ \sum\limits_{k=1}^{\infty }|a _k|(p_k^2+q_k^2) <+\infty\},
\end{equation*}
where $\ z=p\oplus q\in E$ and $z={\bf R}u$.
The Hamilton function
$\mathbb H$ generates a Hamiltonian flow $\bf \Phi$ in a symplectic space
$E={\bf R}(H)$. According to the Theorem \ref{Th3.2.}, the flow $\bf \Phi$ preserves the measure $\lambda _{{\mathcal F},{\mathcal G}}$. The flow $\bf \Phi$ is given by
\begin{equation*}
{\bf \Phi}_t(q,p)=(\cos ({\bf A} t)q-\sin ({\bf A} t)p, \sin ({\bf A} t)q+\ cos({\bf A} t)p),
\end{equation*}
where $t\in {\mathbb R},\ (q,p)\in E,$ $\bf A$ is a self-adjoint operator in the space $E$ such that
\begin{equation}\label{eugenf}{\bf A} f_j=a _jf_j,\ {\bf A} g_j=a _jg_j,\ j\in \mathbb N.\end{equation}

Let us introduce action-angle coordinates
$$q_k=\rho _k\cos \phi_k,\, p_k=\rho _k\sin \phi _k,\ k\in \mathbb N, \quad (\rho _k,\phi _k)\in (0,+ \infty )\times {\mathbb R}|_{{\rm mod}\, 2\pi},$$
The Hamiltonian flow $\Phi$ in the representation of this coordinates is given by a one-parameter family of mappings $\hat {\bf \Phi }_t,\ t\in \mathbb R$:
\begin{equation*}
\hat {\bf \Phi}_t(\rho ,\phi )=(\rho , \phi +a t),
\end{equation*}
where $ t\in {\mathbb R},\ (\rho,\phi)\in \ell _2^+\times ({\mathbb R}|_{{\rm mod}\, 2\pi})^ {\mathbb N}$ and $a =\{ a _k\}\in {\mathbb R}^{\mathbb N}$.

{\bf Example 4}. {\it Hyperbolic oscillator} is a
Hamiltonian system. The Hamiltonian function is densely defined on the phase space $(E,\omega )$ by the equality
\begin{equation}\label{hype}
\tilde {\mathbb H}={1\over 2}\sum\limits_{k\in {\mathbb N}} a _k(p_k^2-q_k^2),\qquad (q,p)\in E_1 =\{ (q,p)\in E:\ \sum\limits_{k=1}^{\infty }|a _k|(p_k^2+q_k^2)<+\infty \}.
\end{equation}
Here $a \equiv \{a _k\}\in \mathbb R^N$.

Let $z_0=(p_0,q_0)\in E$ be the initial point of the phase trajectory
\begin{equation}\label{Tr}
{\bf \Psi }_t(z_0)=z(t ,z_0),\ t\in (T_*,T^*).
\end{equation}
Then the trajectory has the form $z(t,z_0)=(q(t,z_0),p(t,z_0)),\ t\in (T_*,T^*)$, where
$$
p_k(t,z_0)=p_{0,k}{\rm ch} (a _kt)+q_{0,k}{\rm sh} (a _kt), \ k\in {\mathbb N},
$$
\begin{equation}\label{osc-tra}
q_k(t,z_0)=q_{0,k}{\rm ch} (a _kt)+p_{0,k}{\rm sh} (a _kt);\ t\in (T_*,T^* ).
\end{equation}

\begin{lemma}\label{Le3.5}\cite{GS22} {\it The interval $(T_*,T^*)$ of existence of a trajectory (\ref{Tr}) in the space $E$ is the real line $\mathbb R$ if and only if $a\in \ell_{\infty }$.

If $a\in \ell_{\infty }$, then the phase flow of the Hamiltonian system
(\ref{hype}) in a symplectic space
$(E,\omega)$ preserves the measure $\lambda _{{\mathcal F},{\mathcal G}}$ and is given by
\begin{equation}\label{hypeosc}
{\bf \Psi}_t(q,p)=({\rm ch} ({\bf A} t)q+{\rm sh} ({\bf A} t)p, {\rm sh} ({ \bf A} t)q+{\rm ch}({\bf A} t)p),\ t\in \mathbb R,
\end{equation}
where the self-adjoint operator $\bf A$ in the space $E$ is given by the equalities (\ref{eugenf}).}
\end{lemma}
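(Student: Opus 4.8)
The plan is to separate the two assertions: (i) the trajectory $\mathbf{\Psi}_t(z_0)$ defined by the coordinatewise formulas (\ref{osc-tra}) stays in $E$ for all $t\in\mathbb R$ exactly when $a\in\ell_\infty$, and (ii) under that hypothesis the resulting flow preserves $\lambda_{\mathcal F,\mathcal G}$. For (i), I would simply estimate the Hilbert norm of $z(t,z_0)$: since $\operatorname{ch}(a_k t)$ and $\operatorname{sh}(a_k t)$ are bounded by $e^{|a_k||t|}$, one gets
\begin{equation*}
\|z(t,z_0)\|_E^2 \le e^{2\|a\|_\infty |t|}\,\|z_0\|_E^2
\end{equation*}
when $a\in\ell_\infty$, so the series converges and the trajectory remains in $E$ for every $t$; moreover the flow does not increase the norm faster than this exponential bound, which in particular gives the generalized Hamiltonian flow in the sense of Section 2. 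Conversely, if $a\notin\ell_\infty$, pick a subsequence $a_{k_j}\to\infty$ and an initial point with $q_{0,k_j}=p_{0,k_j}$ comparable to $2^{-j}$ and all other coordinates zero; then for any fixed $t>0$ the $k_j$-th coordinate of $z(t,z_0)$ grows like $e^{a_{k_j}t}2^{-j}$, and choosing the size of the coordinates appropriately (or noting that for any $t_0>0$ we can arrange divergence at time $t_0$) shows the trajectory leaves $E$ in finite time. The cleaner statement is: the supremum of times $T$ for which $\mathbf\Psi_t(z_0)\in E$ for all $|t|<T$ and all $z_0\in E$ is finite iff $a\notin\ell_\infty$, and one exhibits an explicit $z_0$ witnessing blow-up at a prescribed finite time; this is exactly the computation already carried out in \cite{GS22}.

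For the explicit form (\ref{hypeosc}) of the flow: with $\mathbf A$ the self-adjoint operator determined by (\ref{eugenf}), boundedness of $a$ makes $\operatorname{ch}(\mathbf A t)$ and $\operatorname{sh}(\mathbf A t)$ bounded operators on $E$ (defined by the functional calculus, or equivalently by their action on the symplectic basis vectors $f_j,g_j$), and the block formula
\begin{equation*}
\mathbf\Psi_t(q,p)=\big(\operatorname{ch}(\mathbf A t)q+\operatorname{sh}(\mathbf A t)p,\ \operatorname{sh}(\mathbf A t)q+\operatorname{ch}(\mathbf A t)p\big)
\end{equation*}
is just the coordinatewise solution (\ref{osc-tra}) reassembled. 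One checks directly that $\frac{d}{dt}\mathbf\Psi_t=\mathbf J\,D\tilde{\mathbb H}\circ\mathbf\Psi_t$ on $E_1$ and that $\mathbf\Psi_{t+s}=\mathbf\Psi_t\mathbf\Psi_s$, using $\operatorname{ch}^2-\operatorname{sh}^2=1$, so it is genuinely the Hamiltonian flow.

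The measure-invariance is the substantive part, and the strategy is to verify the hypotheses of Theorem \ref{Th3.2.}. The flow $\mathbf\Psi_t$ respects the decomposition $E=\bigoplus_k E_k$ into the two-dimensional symplectic planes $E_k=\operatorname{span}\{f_k,g_k\}$: by (\ref{osc-tra}) the projection $\mathbf P_{E_k}\mathbf\Psi_t$ depends only on $(q_k,p_k)$, i.e.\ is independent of $\mathbf P_{E_k^\perp}x$, and on each $E_k$ the map is the linear transformation with matrix $\begin{pmatrix}\operatorname{ch}(a_k t)&\operatorname{sh}(a_k t)\\ \operatorname{sh}(a_k t)&\operatorname{ch}(a_k t)\end{pmatrix}$, which is manifestly $C^\infty$ (indeed real-analytic) on $E_k\cong\mathbb R^2$. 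Its Jacobian determinant is $\operatorname{ch}^2(a_k t)-\operatorname{sh}^2(a_k t)=1$, so each two-dimensional factor map is area-preserving (this is the $n=1$ Liouville theorem, nothing more); $\mathbf\Psi_t$ is a symplectomorphism of $(E,\omega)$ because it preserves each $\omega|_{E_k}$ and the planes are $\omega$-orthogonal. Hence all hypotheses of Theorem \ref{Th3.2.} hold for every $t\in\mathbb R$, and $\lambda_{\mathcal F,\mathcal G}$ is $\mathbf\Psi_t$-invariant.

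The main obstacle is a matter of care rather than depth: one must make sure that when $a\in\ell_\infty$ the infinite product formula for $\lambda_{\mathcal F,\mathcal G}(\mathbf\Psi_t\Pi)$ really is unchanged, i.e.\ that applying a measure-preserving map in each coordinate plane to an absolutely measurable symplectic beam $\Pi=\times_{j}B_j$ yields again such a beam with the same $\prod_j\lambda_2(B_j)$ — this is where the convergence condition $\sum_j\max\{\ln\lambda_2(B_j),0\}<\infty$ in Definition \ref{def1} must be checked to be stable under the transformation, and it is, precisely because each plane map has determinant $1$ so $\lambda_2(\mathbf\Psi_t^{(k)}B_k)=\lambda_2(B_k)$ term by term. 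Once this bookkeeping is in place — or simply by invoking Theorem \ref{Th3.2.}, which already packages it — the proof is complete.
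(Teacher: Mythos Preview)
Your proposal is correct and follows essentially the same route as the paper: the paper's argument is the one-line remark that the lemma is a consequence of the explicit coordinatewise trajectories (\ref{osc-tra}) together with Theorem~\ref{Th3.2.}, with the blow-up construction for $a\notin\ell_\infty$ deferred to \cite{GS22}. You have simply unpacked those two ingredients --- the $e^{\|a\|_\infty|t|}$ norm bound from (\ref{osc-tra}) for global existence, and the verification that each two-dimensional block of $\mathbf\Psi_t$ satisfies the hypotheses of Theorem~\ref{Th3.2.} --- so nothing further is needed.
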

{The statement of  lemma \ref{Le3.5} is a consequence of the equalities ({\ref{osc-tra}}) and the theorem \ref{Th3.2.}. The emergence of a singularity (going to infinity) in a finite time for a flow trajectory (\ref{Tr}) is described in the work \cite{GS22}.}

In contrast to the case of a hyperbolic oscillator with a limited set of frequencies. And it also contrast to the case of a harmonic oscillator. A densely defined Hamiltonian vector field on the space $E$ does not allow defining the group of Hamiltonian transformations of the space $E$ and the space $E_2=\{ (q,p)\ in E:\ \sum\limits_{k=1}^{\infty }|a _k|^2(p_k^2+q_k^2)<+\infty \}$.

{Recall, that the phenomenon of a gradient explosion of the solution of an evolutionary nonlinear partial differential equation
consists in the existence of a solution to the evolution equation on a limited time interval. The gradient of that is unlimited in the norm of the Banach space of values of the solution.}
A gradient explosion is observed when studying solutions to gas dynamics equations (Hopf equations),
when studying the phenomenon of self-focusing for solutions of the nonlinear Schrodinger equation \cite{Zhidkov, Bourgain}.
We give an example of a system of hyperbolic oscillators as a linear Hamiltonian system whose solutions admit the phenomenon of gradient explosion.

The Hamiltonian system of hyperbolic oscillators on the phase space $(E,\omega _{\bf J})$ can be considered in terms of a quantum system on the complexification $H$ of the space $E$. Its described by the wave function $u=q+ip$. With this approach, the energy functional is expressed through the wave function by the equality
\begin{align*}
h(q,p)=&-\frac{1}{4}\sum\limits_{k=1}^{\infty}[({\bf \Delta} u_k+{\bf \Delta} \bar u_k) (u_k+\bar u_k)+({\bf \Delta} u_k-{\bf \Delta} \bar u_k) (u_k-\bar u_k)]=
-{\rm Re}({\sqrt {\bf \Delta }} u,{\sqrt {\bf \Delta }} \bar u)_H.
\end{align*}
Here $\bf \Delta $ is a self-adjoint operator in the space $H$ with a simple discrete non-negative spectrum $\sigma (\bf \Delta )=\{ \omega _k\}$, and ${\sqrt {\bf \Delta }}$ is a non-negative square root of the operator $\bf \Delta $.

Let $\{ \psi _k\}$ be an orthonormal basis of the eigenvectors of the operator $\bf \Delta $. An arbitrary vector $u\in H$ admits the expansion $u=\sum\limits_{k=1}^{\infty }(q_k+ip_k)\psi _k=q+ip$. Then the Hamilton equations generated by the Hamilton function (\ref{hype}) take the form of the equation
$$
i\frac{d}{dt}u(t)={\bf \Delta }\bar u(t),\, t\in {\mathbb R},
$$
which is Hamiltonian, but is not the Schrodinger equation because, firstly, it is not linear over the field of complex numbers, and, secondly, it is not conservative.

The observed unlimited growth of the kinetic energy of a Hamiltonian system over a finite time is a phenomenon of gradient explosion (see \cite{Bourgain, Zhidkov}).
Phase trajectories of the Hamiltonian system (\ref{hype})
leave phase space in a finite time. For the Hamiltonian system under consideration (\ref{hype}), a natural symplectic extension into a locally convex space containing the space $E$ is found.

\section{Expansion of phase space and continuation of dynamics}

Let us extend the space $E=Q\oplus P\sim \ell _2\oplus \ell _2$ to the locally convex space of sequences ${\mathbb E}={\mathbb {R^N}}\oplus {\mathbb {R^N }}\supset E$. The locally convex space $\mathbb E$ is endowed with the Tikhonov topology, so the embedding $E\subset \mathbb E$ is dense and continuous. Let us extend the symplectic form $\omega$ from the space $E$ to the space $\mathbb E$
and flow $\bf \Psi$.
\subsection{Continuation of the symplectic form}

Into the  LCS $\mathbb E$ we can consider the space ${\mathbb R}^{\mathbb N}\oplus {\mathbb R}^{\mathbb N}$ of numerical sequences endowed with a metrizable topology of a pointwise (coordinate-wise) convergence. Then the embedding of the Hilbert space $E$ into the LCS $\mathbb E$ is continuous and dense. Then we can pose the question of extending the Hamiltonian flow from the space $E$ to the  space ${\mathbb E}$.

Let $E=Q\oplus P$, $({\mathcal E},{\mathcal F})$ be an ONB in ​​the Hilbert space $E$, in which the symplectic form $\omega _J$ has the canonical form:
$\omega _J ((\hat q,\hat p),(\tilde q,\tilde p))=\sum\limits_{k=1}^{\infty }(\hat q_k\tilde p_k-\tilde q_k\hat p_k)$.

We call the function $\Omega _{\bf J} :\ {\mathbb E}\times {\mathbb E}\supset D(\Omega _{\bf J})\to \mathbb R$
{\it pseudosymplectic form} on the space $\mathbb E$ if for each $z\in \mathbb {E}$ the set $D(z)=\{y\in \mathbb {E}:\ (z,y) \in D(\Omega _{\bf J})\}$ is a linear space, and the map $\Omega _{\bf J}(z,\cdot ):\ D(z)\to \mathbb R$ is a linear functional on the space $D(z)$ and the following conditions are satisfied:

1) if $y\in D(z)$, then $z\in D(y)$ and $\Omega _J(y,z)=-\Omega _J(z,y)$;

2) if $\Omega _J(z,y)=0\ \forall \ y\in D(z)$, then $z=0$;

3) if $z\in E$, then $D(z)\supset E$ and $\Omega _J(z,y)=\omega _J(z,y) \ \forall \ y\in E$.

The pair $({\mathbb E},\Omega)$, where $\mathbb E$ is a linear space containing $E$ and $\Omega $ is a pseudosymplectic form, will be called a pseudosymplectic space.

\begin{lemma}\label{lOmega}
The symplectic form $\omega _{\bf J}$ on the space $E$ can be extended to the pseudosymplectic form $\Omega _{\bf J}$ on the space $\mathbb E$.
\end{lemma}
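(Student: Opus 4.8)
The plan is to extend $\omega_{\bf J}$ by the same coordinate formula, but to take as the domain of the extension precisely those pairs of sequences for which that formula makes unambiguous sense. Writing elements of $\mathbb E$ as $z=(\hat q,\hat p)$ and $y=(\tilde q,\tilde p)$ with $\hat q,\hat p,\tilde q,\tilde p\in\mathbb R^{\mathbb N}$, I set
\[
D(\Omega_{\bf J})=\Big\{(z,y)\in\mathbb E\times\mathbb E:\ \sum_{k=1}^\infty\big(|\hat q_k\tilde p_k|+|\tilde q_k\hat p_k|\big)<\infty\Big\},
\]
and on this set
\[
\Omega_{\bf J}(z,y)=\sum_{k=1}^\infty\big(\hat q_k\tilde p_k-\tilde q_k\hat p_k\big).
\]
I would then check, in order, the requirements in the definition of a pseudosymplectic form.

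First, for fixed $z$ the slice $D(z)=\{y\in\mathbb E:\ (z,y)\in D(\Omega_{\bf J})\}$ is a linear subspace of $\mathbb E$: if $y_1,y_2\in D(z)$ and $\alpha,\beta\in\mathbb R$, applying the triangle inequality termwise to $|\hat q_k(\alpha\tilde p_{1,k}+\beta\tilde p_{2,k})|$ and to $|(\alpha\tilde q_{1,k}+\beta\tilde q_{2,k})\hat p_k|$ gives $\alpha y_1+\beta y_2\in D(z)$. On $D(z)$ the map $y\mapsto\Omega_{\bf J}(z,y)$ is linear, since absolute convergence permits termwise addition of the two series and extraction of scalars. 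Property (1) is immediate: the condition defining $D(\Omega_{\bf J})$ is symmetric under $(z,y)\mapsto(y,z)$, and interchanging the arguments negates every summand, so $y\in D(z)\Rightarrow z\in D(y)$ and $\Omega_{\bf J}(y,z)=-\Omega_{\bf J}(z,y)$.

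For property (3), if $z=(\hat q,\hat p)\in E$ then $\hat q,\hat p\in\ell_2$, so for any $y=(\tilde q,\tilde p)\in E$ the Cauchy--Schwarz inequality gives $\sum_k|\hat q_k\tilde p_k|\le\|\hat q\|_{\ell_2}\|\tilde p\|_{\ell_2}<\infty$ and likewise $\sum_k|\tilde q_k\hat p_k|<\infty$; hence $E\subset D(z)$, and there the series defining $\Omega_{\bf J}(z,y)$ coincides with the series defining $\omega_{\bf J}(z,y)$, so $\Omega_{\bf J}$ genuinely restricts to $\omega_{\bf J}$ on $E\times E$. For property (2), suppose $z\ne0$ and choose an index $k_0$ with $\hat q_{k_0}\ne0$ or $\hat p_{k_0}\ne0$; taking $y$ supported at the single index $k_0$ — with $\tilde p_{k_0}=1$ and the rest zero in the first case, or $\tilde q_{k_0}=1$ and the rest zero in the second — one has $y\in D(z)$ trivially and $\Omega_{\bf J}(z,y)=\hat q_{k_0}\ne0$, respectively $\Omega_{\bf J}(z,y)=-\hat p_{k_0}\ne0$, so $\Omega_{\bf J}(z,\cdot)$ cannot vanish identically on $D(z)$.

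There is no real obstacle; the one point that needs care is the choice of the domain. It must be large enough to contain $E\times E$ (needed both for property (3) and for $\Omega_{\bf J}$ to be a true extension of $\omega_{\bf J}$) and, for each $z$, to contain enough finitely supported vectors to witness non-degeneracy (property (2)); simultaneously, the absolute-convergence requirement is exactly what makes $\Omega_{\bf J}(z,\cdot)$ a well-defined linear functional on the vector space $D(z)$, independent of the order of summation. I would also note in passing that $D(\Omega_{\bf J})$ is not itself a linear subspace of $\mathbb E\times\mathbb E$ — adding two admissible pairs creates cross terms $\hat q_{1,k}\tilde p_{2,k}$ that need not be summable — but the definition of a pseudosymplectic form does not demand this, only linearity of each slice $D(z)$.
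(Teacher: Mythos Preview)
Your argument is correct and follows the same approach as the paper: extend $\omega_{\bf J}$ by the identical coordinate formula on the set of pairs where it converges, then verify axioms (1)--(3). The only minor difference is that the paper takes the slightly larger domain $D_{\bf J}(z)=\{(q',p')\in\mathbb E:\ \{q_kp_k'-q_k'p_k\}\in\ell_1\}$, requiring absolute convergence of the differences rather than of the two product series separately; your smaller domain still contains $E$ and all finitely supported vectors, so the verification goes through unchanged (and is in fact carried out more explicitly in your version than in the paper's).
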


\begin{proof}
For each $z=(q,p)\in \mathbb E$ we set $D_{\bf J}(z)=\{ (q',p')\in {\mathbb E}:\ \{ q_kp_k' -q_k'p_k\}\in l_1\}$. Then $D_{\bf J}(z)$ is a linear subspace in the locally convex space $\mathbb E$.

Let us define $\Omega _{\bf J}(z,\cdot ):\ D_{\bf J}(z)\to \mathbb R$ by the equality \begin{equation}\label{eqom}
\Omega _{\bf J}(z,z')=\sum\limits_{k=1}^{\infty}q_kp_k'-q_k'p_k.\end{equation}
Then the mapping $\Omega _{\bf J}$ defined on the set $D(\Omega _{\bf J})=\{ \bigcup\limits_{z\in \mathbb E}(z,D(z) )\}$ by the equality (\ref{eqom}), satisfies 1)-3) and is a pseudosymplectic form on the space $\mathbb E$. Consequently, the pseudosymplectic space $({\mathbb E},\Omega _{\bf J})$ is an extension of the symplectic space $(E,\omega _{\bf J}).$
\end{proof}

\subsection{Continuation of the flow $\bf \Psi $ to the space $\mathbb E$}

Let $\{ a_j\} \in \mathbb{R^N}$ be a sequence of parameters of the Hamilton function (\ref{hype}). Then the formula (\ref{hypeosc}) determines the flow
$\bf \Psi$ in the pseudosymplectic space $({\mathbb E},\Omega _{\bf J})$:
\begin{equation}\label{exflow}{\bf \Psi}_t(p,q)=(\ch ({\bf A }t)p+\sh ({\bf A }t)q,\, \ ch ({\bf A }t)q+\sh ({\bf A }t)p),\, t\in {\mathbb R}.\end{equation}

A one-parameter family of transformations (\ref{exflow}) is a continuation of the flow $\bf \Psi$ from the space ${ E}$ into the space $\mathbb E$ through times $T_*$ and $T^*$ of the existence of the E-valued solution.

It is easy to see that the extended flow $\bf \Psi$ in the space ${\mathbb E}$ preserves the pseudosymplectic form $\Omega _{\bf J}$.
Repeating the reasoning of point 3, it is easy to show that the measure $\lambda _{\mathcal F,G}$ on the space $E$ can be transformed into the measure ${\mathbf \lambda }_{\mathcal F,G}$ on the space $\mathbb E$ as a countable product of Lebesgue measures on two-dimensional subspaces $E_k,\ k\in \mathbb N$ of the space $\mathbb E$. Such a measure is invariant under the extended flow $\Psi$, which preserves the class of symplectic bars of the space $\mathbb E$ and the values of the measure on such symplectic bars.

\begin{theorem}\label{l6.1}\cite{GS22}
Let in the basis ${\mathcal E}={\mathcal F}\bigcup {\mathcal G}$ be
the ONB in a Hilbert space $E$ in which the symplectic form $\omega $ has the canonical form (\ref{simp}). Let ${\mathbb E}={\mathbb R}^{\mathbb N}\oplus {\mathbb R}^{\mathbb N}$ and let
$({\mathbb E},\Omega _{\bf J})$ is a pseudosymplectic extension of the symplectic space $(E,\omega _{\bf J})$. Let $E_0$ be a subspace of the space $\mathbb E$ whose vectors are linear combinations of vectors from the subspaces $E_k={\rm span}(e_k,f_k),\ k\in \mathbb N$.

Let the quadratic function $h$ be defined on a dense subspace $E_0$ in a locally convex space $\mathbb E$ by the equality $h=\sum\limits_{k=1}^{\infty }h_k$, where $h_k$ is a symmetric quadratic form on the two-dimensional subspace $E_k.$
Then the Hamiltonian vector field ${\bf v}={\bf J}\nabla h\, :\ E_0\to \mathbb E$ is densely defined on the subspace $E_0$.
The vector field $\bf v$ defines in the space $E_0$ a  Hamiltonian flow ${\bf \Phi }_t,\, t\in \mathbb R$, which admits a unique extension by continuity to a Hamiltonian flow $\bf \Phi$ on space $\mathbb E$. Moreover, the symplectic measure $\lambda _{\mathcal F,G}$ is invariant under the Hamiltonian flow ${\bf \Phi }_t,\, t\in \mathbb R$, on the space $\mathbb E$.
\end{theorem}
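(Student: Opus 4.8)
The plan is to exploit the fact that the Hamiltonian is completely decoupled: since $h=\sum_{k=1}^{\infty}h_k$ with $h_k$ a symmetric quadratic form carried by the two-dimensional symplectic plane $E_k=\mathrm{span}(e_k,f_k)$, the system is a countable product of independent linear Hamiltonian systems, one on each $E_k$, and ${\bf J}$ leaves every $E_k$ invariant (${\bf J}e_k=-f_k$, ${\bf J}f_k=e_k$). First I would verify that the field is densely defined: for $z\in E_0$ only finitely many components $z_k\in E_k$ are non-zero, so $h(z)=\sum_k h_k(z_k)$ is a finite sum, its Gateaux differential along directions in $E_0$ is represented by $\nabla h(z)=\sum_k H_k z_k\in E_0$ (with $H_k$ the $2\times2$ matrix of $h_k$), and hence ${\bf v}(z)={\bf J}\nabla h(z)=\sum_k {\bf J}_kH_kz_k\in E_0$. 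Since $E_0$, the space of finitely supported sequences, is dense in ${\mathbb E}={\mathbb R}^{\mathbb N}\oplus{\mathbb R}^{\mathbb N}$ in the product topology, ${\bf v}:E_0\to{\mathbb E}$ is densely defined.

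Next, for the flow, I would note that the restriction of ${\bf v}$ to $E_k$ is the linear map $A_k={\bf J}_kH_k$, so the flow on $E_k$ is the matrix exponential $e^{tA_k}$, defined for all $t\in{\mathbb R}$. On $E_0$ the flow ${\bf \Phi}_t$ thus acts coordinate-wise by $z_k\mapsto e^{tA_k}z_k$, and I would \emph{define} its extension to ${\mathbb E}$ by the same coordinate-wise rule. This map is continuous in the product topology (each output coordinate depends continuously, indeed linearly, on the single input coordinate $z_k$), it is a one-parameter group because each $e^{tA_k}$ is, and it agrees with ${\bf \Phi}_t$ on the dense set $E_0$; as ${\mathbb E}$ is metrizable and Hausdorff, it is the unique continuous extension, so ${\bf \Phi}$ is a generalized Hamiltonian flow on ${\mathbb E}$ in the sense of Section~2 (with the flow equation holding coordinate-wise, $\tfrac{d}{dt}({\bf \Phi}_tz)_k=A_k({\bf \Phi}_tz)_k$). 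Because $A_k^{\mathrm T}{\bf J}_k+{\bf J}_kA_k=0$ when $A_k={\bf J}_kH_k$ with $H_k$ symmetric, each $e^{tA_k}$ lies in $\mathrm{Sp}(E_k,\omega_k)$; as $\Omega_{\bf J}(z,z')=\sum_k(q_kp_k'-q_k'p_k)=\sum_k\omega_k(z_k,z_k')$ is then preserved term by term and its domain of convergence is unaffected, ${\bf \Phi}_t$ preserves $\Omega_{\bf J}$.

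Finally, for measure invariance --- the one genuinely arithmetic point --- I would observe that $\mathrm{tr}\,A_k=\mathrm{tr}({\bf J}_kH_k)=0$ (skew times symmetric is traceless), whence $\det e^{tA_k}=e^{t\,\mathrm{tr}\,A_k}=1$, so each $e^{tA_k}$ preserves the two-dimensional Lebesgue measure $\lambda_2$ on $E_k$. Since $\lambda_{{\mathcal F},{\mathcal G}}$ on ${\mathbb E}$ is the countable product of these Lebesgue measures --- on a symplectic bar $\Pi=\times_{k=1}^{\infty}B_k$ one has $\lambda_{{\mathcal F},{\mathcal G}}(\Pi)=\prod_k\lambda_2(B_k)$ --- and ${\bf \Phi}_t(\Pi)=\times_k e^{tA_k}(B_k)$ with $\lambda_2(e^{tA_k}B_k)=\lambda_2(B_k)$, the class of absolutely measurable symplectic bars is carried onto itself (the summability condition $\sum_k\max\{\ln\lambda_2(B_k),0\}<\infty$ is invariant) and the value of the measure is unchanged. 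Invariance on bars then propagates to the ring they generate by uniqueness of the additive extension (Theorem~\ref{th3.6.}) and thence to its completion, exactly as in Section~3; equivalently, ${\bf \Phi}_t$ meets the hypotheses of the ${\mathbb E}$-analogue of Theorem~\ref{Th3.2.}, being a symplectomorphism whose component ${\bf P}_{E_k}{\bf \Phi}_t$ depends only on ${\bf P}_{E_k}x$ and is continuously differentiable on $E_k$. The hard part is not any single computation but making this last step precise: transferring from $E$ to the non-normable sequence space ${\mathbb E}$ the construction of the ring of symplectic bars, the finite additivity of $\lambda_{{\mathcal F},{\mathcal G}}$, and its completion, while keeping track of the fact that $\Omega_{\bf J}$ is only a partially defined bilinear form; once that bookkeeping is in place the invariance follows as above.
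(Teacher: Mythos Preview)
Your proposal is correct and follows essentially the same approach as the paper. The paper does not give a self-contained proof of this theorem but cites \cite{GS22}; however, the paragraph immediately preceding the theorem sketches exactly your strategy: extend the flow coordinate-wise to $\mathbb{E}$ (equation (\ref{exflow}) being the hyperbolic instance), observe that each two-dimensional block is a linear symplectomorphism of $E_k$ and hence preserves $\lambda_2$, and conclude that the class of symplectic bars and the product measure $\lambda_{\mathcal{F},\mathcal{G}}=\prod_k\lambda_2$ are preserved, with invariance then propagating to the ring $r_{\mathcal{F},\mathcal{G}}$ and its completion by the machinery of Section~3 (Theorems \ref{th3.6.} and \ref{Th3.2.}).
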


\begin{corollary}\label{c6.2}
{The one-parameter family ${\bf U}_{\bf \Phi }(t),\, t\in \mathbb{R}$, of linear operators in the space ${\mathcal H}_{{\mathcal E},{\mathcal F}}$ acting according to the rule
\begin{equation*}
{\bf U}_{\bf \Phi }(t)u(x)=u({\bf \Phi} (t)x),\qquad t\in {\mathbb R},\quad u\in {\mathcal H}_{{\mathcal E}, {\mathcal F}},\quad x\in E,
\end{equation*}
is the Koopman unitary representation of the Hamiltonian flow $\bf \Phi$
in the space ${\mathcal H}_{{\mathcal E},{\mathcal F}}$. }
\end{corollary}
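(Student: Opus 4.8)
The plan is to obtain Corollary~\ref{c6.2} as an instance of the standard principle that a measure-preserving automorphism of a measure space induces a Koopman unitary group on the associated $L_2$-space, here carried out on the extended phase space $\mathbb E$. I would proceed in four steps: (i) show that for each $t\in\mathbb R$ the extended flow $\Phi_t$ of~(\ref{exflow}) is an automorphism of the measure space $({\mathbb E},{\mathcal R}_{{\mathcal E},{\mathcal F}},\lambda_{{\mathcal F},{\mathcal G}})$ that preserves $\lambda_{{\mathcal F},{\mathcal G}}$; (ii) deduce that ${\bf U}_{\bf\Phi}(t)$ maps the everywhere dense subspace of equivalence classes of simple functions isometrically onto itself; (iii) use the flow identities $\Phi_s\circ\Phi_t=\Phi_{s+t}$ and $\Phi_0=\mathrm{id}$ to turn this family into a group of isometries; (iv) conclude that each ${\bf U}_{\bf\Phi}(t)$, being a surjective isometry with two-sided inverse ${\bf U}_{\bf\Phi}(-t)$, is unitary, and extend the family by continuity to ${\mathcal H}_{{\mathcal E},{\mathcal F}}$ precisely as the Koopman representation was introduced in Section~4.

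For step~(i) the argument I have in mind is this. By Theorem~\ref{l6.1} and formula~(\ref{exflow}) the flow $\Phi_t$ acts coordinatewise with respect to the splitting of $\mathbb E$ into the two-dimensional subspaces $E_k={\rm span}(e_k,f_k)$, its restriction to each block being a linear bijection of determinant one. Hence $\Phi_t$ carries an absolutely measurable symplectic beam $\times_{j=1}^\infty B_j$ of the form~(\ref{Pi}) to $\times_{j=1}^\infty\Phi_t|_{E_j}(B_j)$, and since each $\Phi_t|_{E_j}$ is an area-preserving bijection of ${\mathbb R}^2$ the Lebesgue measures $\lambda_2(B_j)$ are unchanged; thus the summability condition $\sum_{j}\max\{\ln\lambda_2(B_j),0\}<+\infty$ of Definition~\ref{def1} is preserved, as is the value $\prod_{j}\lambda_2(B_j)=\lambda_{{\mathcal F},{\mathcal G}}(\times_j B_j)$. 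Therefore $\Phi_t$ and $\Phi_{-t}=\Phi_t^{-1}$ map ${\mathcal K}_{{\mathcal F},{\mathcal G}}(\mathbb E)$ into itself, and by the uniqueness of the additive extension in Theorem~\ref{th3.6.} together with the uniqueness of the inner/outer-measure completion this propagates to the generated ring $r_{{\mathcal F},{\mathcal G}}$ and then to ${\mathcal R}_{{\mathcal E},{\mathcal F}}$, with $\lambda_{{\mathcal F},{\mathcal G}}$ preserved at each stage; this last invariance is the content of Theorem~\ref{Th3.2.}, whose hypotheses (coordinatewise action and $C^1$-dependence on each $E_k$) are visibly satisfied by~(\ref{exflow}).

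For steps~(ii)--(iv): if $u=\sum_{k=1}^m\alpha_k\chi_{B_k}$ is simple then ${\bf U}_{\bf\Phi}(t)u=\sum_{k=1}^m\alpha_k\chi_{\Phi_t^{-1}(B_k)}$ is again simple, with pairwise disjoint pieces, and $\lambda_{{\mathcal F},{\mathcal G}}(\Phi_t^{-1}(B_k))=\lambda_{{\mathcal F},{\mathcal G}}(B_k)$ by~(i); hence $n_2({\bf U}_{\bf\Phi}(t)u)=n_2(u)$, so ${\bf U}_{\bf\Phi}(t)$ descends to a linear isometry of the space of equivalence classes of simple functions, which is dense in ${\mathcal H}_{{\mathcal E},{\mathcal F}}$ by Theorem~\ref{r-space}. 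From $\Phi_s\circ\Phi_t=\Phi_{s+t}$ and $\Phi_0=\mathrm{id}$ one reads off ${\bf U}_{\bf\Phi}(s){\bf U}_{\bf\Phi}(t)={\bf U}_{\bf\Phi}(s+t)$ and ${\bf U}_{\bf\Phi}(0)=I$ on simple functions; in particular ${\bf U}_{\bf\Phi}(t){\bf U}_{\bf\Phi}(-t)={\bf U}_{\bf\Phi}(-t){\bf U}_{\bf\Phi}(t)=I$, so ${\bf U}_{\bf\Phi}(t)$ is a surjective isometry of a dense subspace and therefore extends uniquely, by continuity, to a surjective isometry --- that is, a unitary operator --- of ${\mathcal H}_{{\mathcal E},{\mathcal F}}$, the group law and the identity ${\bf U}_{\bf\Phi}(0)=I$ passing to the closure. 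This extended family is the Koopman representation of $\bf\Phi$ asserted in the corollary; I would not claim strong continuity, in accordance with the remark in the Introduction that this Koopman group fails to be continuous in the strong operator topology.

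The step I expect to be the main obstacle is~(i): the genuinely non-routine work is to verify that the measure-invariance known for symplectic beams (Theorems~\ref{Th3.2.} and~\ref{l6.1}) actually survives both the generation of the ring $r_{{\mathcal F},{\mathcal G}}$ from the semiring $\Lambda$ of Lemma~\ref{l3.1} and the inner/outer-measure completion to ${\mathcal R}_{{\mathcal E},{\mathcal F}}$, now on the locally convex space $\mathbb E$ rather than on the Hilbert space $E$ --- in particular, that $\Phi_t$ remains measurable in the sense that $\Phi_t^{-1}({\mathcal R}_{{\mathcal E},{\mathcal F}})\subseteq{\mathcal R}_{{\mathcal E},{\mathcal F}}$. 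Once the measurability of $\Phi_t$ and the identity $\lambda_{{\mathcal F},{\mathcal G}}\circ\Phi_t^{-1}=\lambda_{{\mathcal F},{\mathcal G}}$ on ${\mathcal R}_{{\mathcal E},{\mathcal F}}$ are in hand, steps~(ii)--(iv) are the textbook Koopman argument.
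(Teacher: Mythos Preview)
Your proposal is correct and follows exactly the approach the paper has in mind: the paper states Corollary~\ref{c6.2} without a separate proof, treating it as an immediate consequence of the measure-invariance in Theorem~\ref{l6.1} together with the standard Koopman construction already spelled out at the start of Section~4 (isometry on simple functions, extension by density to a unitary group). Your steps (i)--(iv) simply make that implicit argument explicit, and your identification of step~(i) as the only substantive ingredient---precisely the content of Theorem~\ref{l6.1}---matches the paper's logical structure.
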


To describe the subspaces of strong continuity of the Koopman unitary representation of the flow $\Psi$, we use not the measure ${\mathbf \lambda }_{\mathbb F,G}$, but an another invariant measure. To define a new invariant measure, we note that the phase flow $\Psi$ is a shift operator in action-angle coordinates $(r,\phi)\in ({\mathbb R}\times {\mathbb R})^{\mathbb N}$ related to the original coordinates $(q,p)\in \mathbb E$ using the replacement
${\bf G}:\, ({\mathbb R}\times {\mathbb R})^{\mathbb N}\to ({\mathbb R}\times {\mathbb R})^{\mathbb N }$, given by the equality $
(q,p)={\bf G}(\rho ,\phi ):\,
q=\rho {\rm ch} \phi,\, p=\rho {\rm sh} \phi $, i.e.
\begin{equation}\label{MGk}
(q_k,p_k)={\bf G}_k(r_k,\phi _k):\,
q_k=r_k{\rm ch} \phi _k,\, p_k=r_k{\rm sh} \phi _k,\quad k\in \mathbb N .
\end{equation}
According to (\ref{exflow}), in action-angle coordinates, the phase flow $\Psi $ is given by the mapping
\begin{equation}\label{actang}
\hat \Psi _t(r,\phi)=(r,\phi +at),\ t\in{\mathbb R}.
\end{equation}

\subsection {The measure on the space
$\mathbb E$ that is flow invariant}
Since we study translation-invariant finitely additive measures on the infinite-dimensional locally convex space of real-valued sequences, such measures will be introduced as infinite products of translation-invariant finitely additive measures on the real line.
On the real line $\mathbb R$ there is a unique, up to a scalar factor, non-negative translation invariant countably additive measure (Haar measure).
But in addition to the Lebesgue measure on the real line, finitely additive measures generated by Banach limits \cite{Usachev, Sukochev} have the property of translational invariance.

Using Banach limits, we define on the real line $\mathbb R$ non-negative normalized finitely additive measures that are invariant under translation. We called that measures as Banach measures.

Let $\beta $ be a Banach limit defined on the space $L_{\infty }({\mathbb R})$, i.e. $\beta \in L_{\infty }^*({\mathbb R})$ and this functional is non-negative, normalized by the condition $\beta ({\bf 1})=1$ and invariant under a shift, i.e. $\beta (\phi )=\beta ({\bf S}_h\phi )\ \forall \, \phi \in L_{\infty }({\mathbb R}),\, \forall \, h \in \mathbb R$, where ${\bf S}_h\phi (x)=\phi (x+h),\ x\in \mathbb R$.

If $\beta $ is a Banach limit defined on the space $L_{\infty }({\mathbb R})$, then $\nu _{\beta}$ is defined 
on the $\sigma$-algebra ${\mathcal L}(\mathbb R)$ of Lebesgue measurable sets of the real line by the equality
\begin{equation}
\label{mebe}
\nu _{\beta}(A)=\beta (\chi _A),\ A\in {\mathcal L}(\mathbb R),
\end{equation}
a shift-invariant non-negative normalized finitely additive Borel measure (whose domain contains the Borel $\sigma$-algebra) measure $\nu _{\beta}$ on the real line is defined.

\begin{definition}\label{def2}
A set $\Pi \subset \mathbb E$ is called an absolutely measurable hyperbolic block in the space $\mathbb E$ if
\begin{equation}\label{7}
\Pi =\{ (q,p)\in \mathbb E:
 (q_i,p_i)=(r_i\ch\phi_i,r_i\sh\phi_i),\, (r_i,\phi_i)\in A_i\times B_i,\, i\in {\mathbb N}\} ,
\end{equation}
where $A_i\times B_i\subset {\mathbb R}\times \mathbb R$, $A_i,B_i$ ; are Lebesgue measurable sets of the real line satisfying the following condition
\begin{equation*}
\sum\limits_{j=1}^{\infty} \ln _+(\nu _{2,\beta }(A_j\times B_j))<+\infty ,
\end{equation*}
in which $\nu _{2,\beta}(A_j\times B_j)=\nu _{\beta }(B_j)\int\limits_{A_j}|r|dr$.
\end{definition}

{\bf Note.} Since for each $k\in {\mathbb N}$ the Jacobian of the $k$-th change (\ref{MGk})
is equal to $|r_k|$, then the Lebesgue measure of the set $A_j\times B_j$ is equal to $\lambda _{2}(A_j\times B_j)=\lambda _{1 }(B_j)\int\limits_{A_j}|r |dr$. In Definition 9, we replace the Lebesgue measure $\lambda _1$ to the Banach measure $\nu _{\beta}$ in the  substitution of the the angular variable.

Let ${\mathcal K}_{\beta}(\mathbb {E})$ be the set of all measurable hyperbolic bars. Let the function ${\lambda}_{{\beta }} :\ {\mathcal K}_{\beta}(\mathbb{E})\to [0,+\infty )$ be defined by the equality
\begin{equation}\label{8}
{\lambda }_{{\beta }}(\Pi )=\prod\limits_{j=1}^{\infty }\nu _{2,\beta }(A_j\times B_j),\
\Pi \in {\mathcal K}_{\beta}(\mathbb{E}).
\end{equation}
\begin{lemma}\label{Lemma3.}
{\it The set function ${\lambda } _{\beta }:\ {\mathcal K}_{\beta}({\mathbb E})\to [0,+\infty )$ is additive and invariant with respect to flow (\ref{exflow}).}
\end{lemma}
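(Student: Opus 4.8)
The statement of Lemma~\ref{Lemma3.} has two parts: finite additivity of $\lambda_\beta$ on the class ${\mathcal K}_\beta(\mathbb E)$ of absolutely measurable hyperbolic bars, and invariance of $\lambda_\beta$ under the extended flow $\Psi$ of~(\ref{exflow}). The plan is to transport the whole problem through the change of variables ${\bf G}$ of~(\ref{MGk}) into action-angle coordinates, where a hyperbolic bar $\Pi$ of the form~(\ref{7}) becomes the product cylinder $\times_{i=1}^\infty (A_i\times B_i)$ in $({\mathbb R}\times{\mathbb R})^{\mathbb N}$, and where, by~(\ref{actang}), the flow acts merely as the coordinate-wise shift $(r,\phi)\mapsto(r,\phi+at)$. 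In these coordinates $\lambda_\beta(\Pi)=\prod_j \nu_{2,\beta}(A_j\times B_j)$ with $\nu_{2,\beta}(A\times B)=\nu_\beta(B)\int_A|r|\,dr$ is exactly a countable product of the two-dimensional finitely additive measures $\mu_j := (|r|\,dr)\otimes\nu_\beta$ on ${\mathbb R}\times{\mathbb R}$; so the lemma reduces to (a) additivity of a countable product of finitely additive measures on a convergence class of cylinders, and (b) the observation that each factor $\mu_j$ is invariant under shifts of the second coordinate.

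\textbf{Key steps.} First I would record that ${\bf G}$ maps $\{r\ne 0\}$ diffeomorphically onto $\{q^2\ne p^2\}$ in each two-dimensional block, with Jacobian $|r_k|$ (noted already after Definition~\ref{def2}), so that the correspondence $\Pi\leftrightarrow \times_j(A_j\times B_j)$ is a bijection between ${\mathcal K}_\beta(\mathbb E)$ and the product cylinders satisfying $\sum_j \ln_+\mu_j(A_j\times B_j)<\infty$; the condition guarantees the infinite product in~(\ref{8}) converges in $[0,+\infty)$ (only finitely many factors exceed $1$, the rest have a convergent sum of logarithms or the product is $0$). Second, for additivity: if $\Pi = \bigsqcup_{m=1}^n \Pi^{(m)}$ with all sets in ${\mathcal K}_\beta(\mathbb E)$, then after passing to action-angle coordinates this is a finite partition of a product cylinder into product cylinders; since each factor measure $\mu_j$ is a genuine (finitely additive, in the $\phi$-variable) product of the countably additive measure $|r|\,dr$ with the finitely additive $\nu_\beta$, one checks $\mu_j$ is finitely additive on measurable rectangles of ${\mathbb R}^2$, and then the standard argument for finite additivity of countable products of normalized-on-a-common-factor measures applies — this is precisely the argument already used for $\lambda_{{\mathcal F},{\mathcal G}}$ in Theorem~\ref{th3.6.} (cited from~\cite{GS22}), which I would invoke with $\lambda_2$ replaced by $\mu_j$, the point being that $\nu_\beta$ is normalized ($\nu_\beta({\mathbb R})$ is finite on bounded sets and the relevant identities only use finite additivity and translation invariance of $\nu_\beta$). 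Third, for invariance: by~(\ref{actang}), $\Psi_t$ sends the bar with data $(A_i,B_i)$ to the bar with data $(A_i, B_i - a_i t)$; since $\nu_\beta$ is shift-invariant, $\nu_{2,\beta}(A_i\times(B_i-a_it)) = \nu_\beta(B_i-a_it)\int_{A_i}|r|\,dr = \nu_\beta(B_i)\int_{A_i}|r|\,dr = \nu_{2,\beta}(A_i\times B_i)$ for every $i$, hence the infinite products agree factor by factor and $\lambda_\beta(\Psi_t\Pi)=\lambda_\beta(\Pi)$; one also notes $\Psi_t$ preserves the convergence condition defining ${\mathcal K}_\beta(\mathbb E)$, so $\Psi_t\Pi\in{\mathcal K}_\beta(\mathbb E)$.

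\textbf{Main obstacle.} The genuinely delicate point is additivity, not invariance. Two subtleties must be handled: (i) the set $\{r=0\}$ (equivalently $\{q=\pm p\}$ in original coordinates) where ${\bf G}$ degenerates — one must verify it is $\mu_j$-null in each factor (true, since $|r|\,dr$ assigns it zero) so it does not affect the additivity bookkeeping; and (ii) the finite additivity of the single two-dimensional factor $\mu_j=(|r|\,dr)\otimes\nu_\beta$ on rectangles and the passage to the infinite product. Because $\nu_\beta$ is only finitely (not countably) additive, one cannot appeal to Fubini or Carathéodory; instead one argues directly, exactly as in~\cite{GS22}: reduce a partition of a cylinder to a finite refinement affecting only finitely many coordinates, use finite additivity in those coordinates, and use that all but finitely many factors are shared. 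I would present this by explicitly citing the structure of the proof of Theorem~\ref{th3.6.} and remarking that every step there uses only finite additivity and normalizability of the one-dimensional factor, both of which $\nu_\beta$ possesses, together with the countable additivity of $|r|\,dr$ in the radial variable.
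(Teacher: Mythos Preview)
Your proposal is correct and follows essentially the same route as the paper. For invariance you use exactly the paper's argument: pass to action-angle coordinates via~(\ref{actang}) and invoke the shift-invariance of $\nu_\beta$ factor by factor. For additivity the paper simply cites \cite{BS-direct}, Corollary~1, whereas you unpack the argument by appealing to the proof structure of Theorem~\ref{th3.6.} from \cite{GS22}; these are different references but the same underlying mechanism (finite additivity of a countable product of finitely additive factor measures on the class of convergent product cylinders), so your more detailed sketch is a faithful expansion of what the paper leaves to citation.
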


\begin{proof}
The additivity of the set function ${\lambda}_{\beta } :\ {\mathcal K}_{\beta}({\mathbb E})\to [0,+\infty )$ was established in the work \cite{BS-direct}, Corollary 1. Its invariance with respect to the flow of a hyperbolic oscillator follows from the equality (\ref{actang}).
\end{proof}

Let $r_{\beta}$ be the ring generated by the family of sets ${\mathcal K}_{\beta}(\mathbb E)$.

\begin{theorem}\label{Theorem 4} Additive set function $\lambda _{\beta }:\ {\mathcal K}_{\beta}(\mathbb {E})\to [0,+\infty ) $ admits a unique additive extension to the ring $r_{\beta}$. Completion of the measure
$\lambda _{\beta }:\ r_{\beta } \to [0,+\infty )$ is the full measure of $\lambda _{\beta }:\ {\mathcal R}_{\beta }\to [0,+\infty )$, which is invariant under the Hamiltonian flow of the system of hyperbolic oscillators
(\ref{exflow}).
\end{theorem}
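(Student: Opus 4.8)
The plan is to reproduce the three-step construction of Section~3 (pass from the generating family to a semiring, then to the generated ring, then take the completion), with the planar Lebesgue measure $\lambda_2$ on each two-dimensional factor $E_k$ replaced by the measure $\nu_{2,\beta}$ of Definition~\ref{def2}, and then to check that flow invariance — already available on the generating family ${\mathcal K}_\beta(\mathbb E)$ by Lemma~\ref{Lemma3.} — survives each extension step. The first point is that ${\mathcal K}_\beta(\mathbb E)$ is closed under intersection: a hyperbolic bar $\Pi=\times_{i=1}^{\infty}(A_i\times B_i)$ is written in the one fixed action--angle coordinate system $(r,\phi)$ attached to the basis $\mathcal E$, so $\Pi\cap\Pi'=\times_i\big((A_i\cap A_i')\times(B_i\cap B_i')\big)$, and since $\nu_{2,\beta}$ is monotone on each factor we have $\nu_{2,\beta}\big((A_i\cap A_i')\times(B_i\cap B_i')\big)\le\nu_{2,\beta}(A_i\times B_i)$, so the one-sided summability condition $\sum_i\ln_+\nu_{2,\beta}(\cdot)<+\infty$ is inherited and $\Pi\cap\Pi'\in{\mathcal K}_\beta(\mathbb E)$. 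Granting this, the class of sets $\Pi\setminus\bigcup_{j=1}^n\Pi_j$ with $\Pi,\Pi_1,\dots,\Pi_n\in{\mathcal K}_\beta(\mathbb E)$ is a semiring by the same argument that proves Lemma~\ref{l3.1}: a difference of two bars decomposes into a finite disjoint union of bars obtained by removing one coordinate factor at a time, and in each resulting bar the factors are only shrunk, so the summability condition persists.

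With additivity of $\lambda_\beta$ on ${\mathcal K}_\beta(\mathbb E)$ supplied by Lemma~\ref{Lemma3.}, the classical theorem on extending a non-negative additive set function from a semiring to the generated ring yields the unique additive extension $\lambda_\beta:r_\beta\to[0,+\infty)$, in exact parallel with Theorem~\ref{th3.6.}. For the completion one argues verbatim as after Theorem~\ref{th3.6.}: form the inner and outer measures $\underline\lambda_\beta,\overline\lambda_\beta$ generated by $\lambda_\beta:r_\beta\to[0,+\infty)$ on all subsets of $\mathbb E$, and set ${\mathcal R}_\beta=\{A\subset\mathbb E:\underline\lambda_\beta(A)=\overline\lambda_\beta(A)\in\mathbb R\}$; the standard Carath\'eodory-type argument (the one already used for $\lambda_{\mathcal E,\mathcal F}$ in Theorem~\ref{r-space}) then shows ${\mathcal R}_\beta$ is a ring containing $r_\beta$ on which $\lambda_\beta$ is a complete finitely additive measure.

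It remains to propagate flow invariance. By (\ref{actang}) the flow $\hat\Psi_t$ acts in the coordinates $(r,\phi)$ as the pure shift $(r,\phi)\mapsto(r,\phi+at)$, so it carries $\Pi=\times_i(A_i\times B_i)$ to $\times_i\big(A_i\times(B_i+a_it)\big)$; since $\nu_\beta$ is translation-invariant and the radial factors are untouched, $\nu_{2,\beta}\big(A_i\times(B_i+a_it)\big)=\nu_{2,\beta}(A_i\times B_i)$ for every $i$, so $\Psi_t$ maps ${\mathcal K}_\beta(\mathbb E)$ bijectively onto itself preserving both the summability condition and the value of $\lambda_\beta$ (this is the invariance half of Lemma~\ref{Lemma3.}). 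Uniqueness of the semiring-to-ring extension then forces $\Psi_t(r_\beta)=r_\beta$ with $\lambda_\beta\circ\Psi_t=\lambda_\beta$ on $r_\beta$; consequently $\Psi_t$ preserves $\underline\lambda_\beta$ and $\overline\lambda_\beta$, maps ${\mathcal R}_\beta$ onto itself, and preserves $\lambda_\beta$ there as well, which is the asserted invariance.

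The one genuinely technical step is the semiring verification: checking that the difference of two hyperbolic bars is a \emph{finite} disjoint union of hyperbolic bars while keeping the $\ln_+$-summability under control. The one-sided form of the condition is exactly what makes this work — only factors with $\nu_{2,\beta}$-measure exceeding $1$ contribute to the sum, and there can be only finitely many such, while cutting any factor only decreases $\nu_{2,\beta}$ and hence $\ln_+\nu_{2,\beta}$ — but some care is needed for bars that degenerate to the empty set (handled by the convention $\lambda_\beta(\emptyset)=0$). Apart from that, the argument is a transcription of the one already given for $\lambda_{\mathcal E,\mathcal F}$, with $\lambda_2$ replaced by $\nu_{2,\beta}$ and translation-invariance of $\lambda_2$ in the angular variable replaced by translation-invariance of the Banach measure $\nu_\beta$.
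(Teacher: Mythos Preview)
Your proposal is correct and follows essentially the same route as the paper: closure of ${\mathcal K}_\beta(\mathbb E)$ under intersection, passage to the semiring $\Lambda$ of porous bars, unique additive extension to $r_\beta$, completion via inner and outer measures, and propagation of flow invariance through each stage using the shift representation (\ref{actang}). The paper organizes the ring extension as an explicit induction over the auxiliary classes $\Lambda_j$ and ${\mathbb V}_j$ (deferring to \cite{GS22}) rather than invoking the abstract semiring-to-ring extension theorem, but the content is identical.

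One point in your explanation of the semiring step needs correction. You write that ``a difference of two bars decomposes into a finite disjoint union of bars obtained by removing one coordinate factor at a time,'' and later identify the technical issue as ``checking that the difference of two hyperbolic bars is a \emph{finite} disjoint union of hyperbolic bars.'' In infinite dimensions this is false: for generic $\Pi,\Pi'\in{\mathcal K}_\beta(\mathbb E)$ the set $\Pi\setminus\Pi'$ is \emph{not} a finite union of bars (coordinate-wise cutting produces countably many pieces), and the $\ln_+$-summability condition does not change this. What Lemma~\ref{l3.1} actually asserts, and what you correctly invoke, is that the larger class $\Lambda=\{\Pi\setminus\bigcup_{j=1}^n\Pi_j\}$ is a semiring, and this follows purely from closure of ${\mathcal K}_\beta(\mathbb E)$ under intersection: for $A=\Pi\setminus\bigcup_i\Pi_i$ and $B=\Pi'\setminus\bigcup_j\Pi_j'$ one has $A\cap B=(\Pi\cap\Pi')\setminus(\bigcup_i\Pi_i\cup\bigcup_j\Pi_j')\in\Lambda$, while $A\setminus B=(A\setminus\Pi')\cup\bigcup_j(A\cap\Pi_j')$ is a finite union of sets of the same shape, which one then disjointifies within $\Lambda$. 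No finite decomposition of $\Pi\setminus\Pi'$ into bars is needed or available. Since you cite the correct lemma, your proof stands; only the parenthetical justification should be amended.
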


\begin{proof}
First, we prove the existence and uniqueness of the additive continuation of the additive function $\lambda _{\beta }:\ {\mathcal K}_{\beta}(\mathbb {E})\to [0,+\infty )$ from class ${ \mathcal K}_{\beta}(\mathbb {E})$ of measurable bars onto the ring $r_{\beta}$ generated by it.
The proof scheme is close to the construction in \cite{GS22}. First of all, note that the intersection of two sets from the class ${\mathcal K}_{\beta}(\mathbb {E})$ again belongs to this class. It follows that the collection of $\Lambda$ porous measurable hyperbolic bars, which are the difference of the block from ${\mathcal K}_{\beta}(\mathbb {E})$ and the finite union of such bars, is a semiring. Then the ring $r_{\beta}$ generated by a family of sets from the class ${\mathcal K}_{\beta}(\mathbb E)$ is generated by the semiring $\Lambda $ and each element of the ring $r_{\beta}$ represent it as a finite union of disjoint sets from the semiring $\Lambda$.

Let us denote by ${\Lambda }_j,\ j=1,2,\dots$ the collection of sets representable as the difference of a beam from the class ${\mathcal K}_{\beta}(\mathbb E)$ and the union from $ j$ bars from the same class (possibly having non-empty intersections). Through ${\mathbb V}_j$
let us denote sets representable as the union of $j$ bars from the class ${\mathcal K}_{\beta}(\mathbb E)$. Then you can use
induction method, by the additivity condition, extend the set function $\lambda _{\beta }:\ {\mathbb V}_1\to [0,+\infty )$ first to $\Lambda _1$, then from $\Lambda _1$ to $V_2$, then from $V_2$ to $\Lambda _2$, etc. As shown in \cite{GS22}, such a continuation does not depend on the choice of representing the set as a porous beam or a finite union of bars. Thus, the set function $\lambda _{\beta }:\ {\mathbb V}_1\to [0,+\infty )$ admits a unique additive extension $\lambda $ to the semiring $\Lambda$. The additive function $\lambda :\ \Lambda \to \mathbb R$ is non-negative by construction and, as is known, admits a unique additive continuation to a non-negative finitely additive measure $\lambda _{\beta }:\ r_{\beta }\to [0,+\infty )$.

The measure $\lambda _{\beta } :\ r_{\beta}\to [0,+\infty )$ defines the outer and inner measures on the set of all subsets of the space $E$ according to  formulas
$$
\overline \lambda _{\beta}(A)=\inf\limits_{B\in r_{\beta},\ B\supset A}\lambda _{\beta}(B),\quad
\underline \lambda _{\beta}(A)=\sup\limits_{B\in r_{\beta},\ B\subset A}\lambda _{\beta}(B),\quad A\subset E ,
$$
respectively. Then many
$$
{\cal R}_{\beta}=\{ A\subset E:\ \overline \lambda _{\beta}(A)=\underline \lambda _{\beta}(A)<+\infty \}
$$
is a ring. The ring ${\cal R}_{\beta}$ is called the completion of the ring ${r}_{\beta}$ with respect to the measure ${\lambda }_{\beta}$. Continuation of the measure $\lambda _{\beta } :\ r_{\beta}\to [0,+\infty )$ to the ring ${\cal R}_{\beta}$ by the formula $\lambda _{\beta }(A)= \overline \lambda _{\beta}(A)=\underline \lambda _{\beta}(A),\ A\in {\cal R}_{\beta}$, is called the completion of the measure $\lambda_{\beta }$.

Invariance property of the measure $\lambda _{\beta } :\ r_{\beta}\to [0,+\infty )$ with respect to the flow
(\ref{exflow}) follows from the definition of the set function (\ref{8}) due to the representation of the flow in the form (\ref{actang}). Because, firstly, the ring $r_{\beta}$ is invariant under the flow (\ref{actang}). And, secondly, the completion of the measure $\lambda _{\beta }:\ r_{\beta } \to [0,+\infty )$ inherits the invariance property by the following reason. Since the flow (\ref{actang}) preserves the measure  $\lambda _{\beta }:\ r_{\beta}\to [0,+\infty )$ it also preserves the approximation of the image of an arbitrary set from the inside and outside by sets from the ring $r_{\beta}$.
\end{proof}

Let ${\mathcal H}_{\beta }=L_2({\mathbb E},{\mathcal R}_{\beta },\lambda _{\beta },{\mathbb C})$ be the Hilbert space defined by the measure $\lambda _{\beta}$ according to the scheme outlined in the proof of Theorem 5. Let us study the unitary representation of the Hamiltonian flow
(\ref{hypeosc}) in the space ${\mathcal H}_{\beta }$.

\section{Koopman group of a hyperbolic oscillator}

\begin{lemma}\label{Lem5.1}\cite{S23} {\it Let $\{ a _k\} \in \ell _{\infty}$. Then the equality (\ref{hypeosc}) defines the flow $\Psi$ in the phase space $E$. The Koopman representation ${\bf U}_{\bf \Psi}$ in the space ${\mathcal H}_{{\mathcal F},{\mathcal G}}$ of the flow $\Psi $ of hyperbolic oscillators is a unitary group. The group ${\bf U}_{\bf \Psi}$ is strongly continuous if and only if the sequence $\{ a _k\}$ is finite.}
\end{lemma}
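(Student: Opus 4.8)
The plan is to prove the two assertions of Lemma \ref{Lem5.1} separately: first that $\mathbf{U}_{\bf \Psi}$ is a well-defined unitary group on $\mathcal{H}_{\mathcal{F},\mathcal{G}}$, then the strong continuity criterion. The first part is essentially a restatement of what was established in Section 4 for the Koopman representation of a measure-preserving flow: when $\{a_k\}\in\ell_\infty$, Lemma \ref{Le3.5} guarantees that \eqref{hypeosc} defines a genuine flow on $E$ preserving $\lambda_{\mathcal{F},\mathcal{G}}$, so each $\mathbf{U}_{\bf\Psi}(t)$ acts as a linear isometry on the dense subspace of simple functions $S(E,\mathcal{R}_{\mathcal{E}},\mathbb{C})$ and extends uniquely to a unitary operator; the group property $\mathbf{U}_{\bf\Psi}(t)\mathbf{U}_{\bf\Psi}(s)=\mathbf{U}_{\bf\Psi}(t+s)$ follows from $\mathbf{\Psi}_{-t}\circ\mathbf{\Psi}_{-s}=\mathbf{\Psi}_{-(t+s)}$.

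For the strong continuity criterion I would argue both directions. For the \emph{if} direction, suppose $\{a_k\}$ is finitely supported, say $a_k=0$ for $k>N$. Then the flow $\mathbf{\Psi}_t$ acts nontrivially only on the finite-dimensional symplectic subspace $E_1\oplus\cdots\oplus E_N\cong\mathbb{R}^{2N}$ and is the identity on the complementary coordinates. On a single symplectic bar $\Pi=\times_{j=1}^\infty B_j$ the Koopman operator only reshuffles the first $N$ factors by the hyperbolic rotation $(\mathrm{ch}(a_kt),\mathrm{sh}(a_kt))$, which is continuous in $t$; since $\lambda_{\mathcal{F},\mathcal{G}}(\Pi)$ is a finite product and the first $N$ factors vary continuously in Lebesgue measure (the hyperbolic map being smooth and measure-preserving on $\mathbb{R}^2$), one gets $\|\mathbf{U}_{\bf\Psi}(t)\chi_\Pi-\chi_\Pi\|_{\mathcal{H}}\to 0$ as $t\to 0$. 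Density of simple functions together with the uniform bound $\|\mathbf{U}_{\bf\Psi}(t)\|=1$ then yields strong continuity on all of $\mathcal{H}_{\mathcal{F},\mathcal{G}}$ by the standard $\varepsilon/3$ argument.

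For the \emph{only if} direction — which I expect to be the main obstacle — suppose infinitely many $a_k$ are nonzero, and produce a vector $u$ with $\|\mathbf{U}_{\bf\Psi}(t)u-u\|\not\to 0$. The natural candidate is an indicator $\chi_\Pi$ of a bar $\Pi=\times_{j=1}^\infty B_j$ with each $B_j$ a fixed bounded rectangle of positive area, supported on infinitely many indices where $a_k\neq 0$. The point is that $\mathbf{U}_{\bf\Psi}(t)\chi_\Pi=\chi_{\mathbf{\Psi}_t\Pi}$, and $\mathbf{\Psi}_t\Pi$ is again a bar $\times_j B_j(t)$ where $B_j(t)$ is the hyperbolic image of $B_j$; then $\|\mathbf{U}_{\bf\Psi}(t)\chi_\Pi-\chi_\Pi\|^2 = \lambda(\Pi\setminus\mathbf{\Psi}_t\Pi)+\lambda(\mathbf{\Psi}_t\Pi\setminus\Pi)$, which one computes as $2\lambda(\Pi)-2\prod_j\lambda_2(B_j\cap B_j(t))$. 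For each individual $j$ with $a_j\neq 0$ one has $\lambda_2(B_j\cap B_j(t))<\lambda_2(B_j)$ for $t\neq 0$, but the product over infinitely many such factors can be forced to $0$ for every $t\neq 0$: one chooses the $B_j$ (e.g. shifting them off the coordinate axes and controlling their shape) and the infinite index set so that $\sum_j(1-\lambda_2(B_j\cap B_j(t))/\lambda_2(B_j))=\infty$ for each fixed $t\neq 0$, while keeping $\sum_j\ln_+\lambda_2(B_j)<\infty$ so that $\Pi$ is an admissible bar with $\lambda(\Pi)>0$. This gives $\|\mathbf{U}_{\bf\Psi}(t)\chi_\Pi-\chi_\Pi\|^2=2\lambda(\Pi)$ for all $t\neq 0$, so $t\mapsto\mathbf{U}_{\bf\Psi}(t)\chi_\Pi$ is discontinuous at $0$, contradicting strong continuity. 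The delicate point is the uniform-in-$t$ divergence of that series; here the unboundedness of the hyperbolic orbit (as opposed to the bounded elliptic orbit of the harmonic oscillator, which explains the contrast noted in the text) is what makes $\lambda_2(B_j\cap B_j(t))$ decay and allows the construction to go through, possibly after passing to a subsequence of times and invoking that failure of continuity along any sequence $t_n\to 0$ suffices.
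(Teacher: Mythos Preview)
The paper does not prove this lemma at all; it merely states it and cites \cite{S23}. So there is no ``paper's own proof'' to compare against, and your proposal must be judged on its merits.

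Your argument for the unitary-group structure and for the \emph{if} direction of the continuity criterion is correct and standard.

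For the \emph{only if} direction your strategy is right --- build a bar $\Pi=\times_j B_j$ with $\prod_j\lambda_2(B_j\cap B_j(t))=0$ for every $t\neq 0$ --- but the construction as sketched has two gaps you should address. First, your suggestion to ``shift $B_j$ off the coordinate axes'' is dangerous: if the $B_j$ are bounded away from the origin for all $j$, then $\Pi\cap E=\varnothing$ (no point has square-summable coordinates), and by Definition~\ref{def1} the measure of the empty bar is zero, so $\chi_\Pi=0$ in $\mathcal H_{\mathcal F,\mathcal G}$. You must keep, say, $0\in B_j$ for all $j$. Second, with a fixed shape such as $B_j=[0,1]^2$ the overlap defect satisfies $1-\lambda_2(B_j\cap B_j(t))\asymp |a_jt|$ for small $a_jt$, and then $\sum_j(1-\lambda_2(B_j\cap B_j(t)))<\infty$ whenever $\sum|a_j|<\infty$; so the product stays positive and your argument fails precisely in the case where $\{a_k\}$ is summable but not finitely supported.

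The fix is to scale the shape of $B_j$ with $a_j$. Taking $B_j=[0,L_j]\times[0,1/L_j]$ with $L_j^2=1/|a_j|$ on the infinite set of indices with $a_j\neq 0$ (and unit squares elsewhere) gives $\lambda_2(B_j)=1$, $0\in B_j$, and a direct computation of the sheared overlap yields $\lambda_2(B_j\cap B_j(t))\le 1-c|t|$ uniformly in $j$ for small $|t|$, with $c>0$ independent of $j$. Then the infinite product vanishes for every $t\neq 0$ and $\|\mathbf U_{\bf\Psi}(t)\chi_\Pi-\chi_\Pi\|^2=2\lambda(\Pi)>0$, as you want. Your closing remark that ``the unboundedness of the hyperbolic orbit \dots\ explains the contrast noted in the text'' is off target: no such contrast is noted, and in fact the same discontinuity phenomenon (with the same criterion) holds for the harmonic oscillator of Example~3.
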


Let $\{ a _k\} \in \mathbb {R^N}$. Then the Hamiltonian flow in the extended phase space $\mathbb E$ is determined by the equality (\ref{exflow}).

Let us introduce  action-angle coordinates associated with  original phase coordinates of the space $\mathbb E$ using the mapping $${\bf G}:\, {\mathbb R}^{\mathbb N}\times {\mathbb R} ^{ \mathbb N}\to {\mathbb E}:\quad
q=r\ch \phi,\, p=r\sh \phi ,\quad r\in {\mathbb R} ^{\mathbb N},\, \phi \in {\mathbb R} ^{\mathbb N},$$ that is
$q_k=r_k\ch \phi_k,\, p_k=r_k\sh \phi _k,\ \forall \ k\in \mathbb N$.

Let $\lambda _{2,\beta }=\nu _{2,\beta }\circ {\bf G}_k^{-1}$ be the image of the measure $\nu _{2,\beta }$ when the mapping ${\bf G}_k:\ {\mathbb R}\times {\mathbb R}\to E_k$ from (\ref{MGk}).

Then if $\{ u_k\} :\ {\mathbb N}\to L_2(E_k,\lambda _{2,\beta},{\mathbb C})$, then
\begin{equation}\label{G_k}
\int\limits_{E_k}|u_k(q_k,p_k)|^2d\lambda _{2,\beta}(q_k,p_k)=\int\limits_{{\mathbb R}\times {\mathbb R}} |\tilde u_k(\rho_k,\phi _k)|^2d\nu _{2,\beta }(\rho _k,\phi _k) \ \ \forall \ k\in \mathbb N ,\end{equation}
where $\tilde u_k(\rho _k,\phi _k)=u_k({\bf G}_k(\rho _k,\phi _k)),\ (\rho _k,\phi _k)\in {\mathbb R }\times {\mathbb R}$.

For each $k\in \mathbb N$ the mapping $\bf G$ bijectively maps the set $({\mathbb R}\backslash \{ 0\})\times {\mathbb R}$ onto itself, but degenerates on the complement $ \{ 0\}\times {\mathbb R} $ of this set. 

The mapping $\bf G$ bijectively maps the set $({\mathbb R}\backslash \{ 0\})^{\mathbb N}\times {\mathbb R^N}$ onto itself, but degenerates on the complement of $\Sigma $ of this set. In this case, the projection $\Sigma _j$ of the degeneracy set $\Sigma$ onto the two-dimensional symplectic subspace $E_j={\rm span}(f_j,g_j)$ is the straight line $\Sigma _j=\{ (0,p_j),\ p_j \in {\mathbb R}\}$. Therefore, for each hyperbolic beam (\ref{7}) the set $\Pi \backslash \Sigma $ is a hyperbolic beam $\Pi '$ of the form (\ref{7}) with $A_j'=A_j\backslash \Sigma _j $ instead of $A_j$. Therefore, the value of the measure $\lambda _{\beta }$ for the intersection of any set $A\in {\cal R}_{\beta}$ with the set $\Sigma$ is defined and equal to zero, which will allow us to induce the measure $\lambda _ {\beta}\circ {\bf G}$ on the ring ${\bf G}^{-1}({\cal R}_{\beta })$ by means of the equality
$$(\lambda _{\beta}\circ {\bf G})(A)=\lambda _{\beta }({\bf G}(A))\ \forall \ A:\, {\bf G}(A)\in {\cal R}_{\beta }.$$

Then the measure $\lambda _{\beta}\circ {\bf G}:\ {\bf G}^{-1}({\mathcal R}_{\rho })\to {\mathbb R}_+ $ is non-negative and finitely additive.
The Hamiltonian system flow
(\ref{hype}) in action-angle coordinates is given by the equality (\ref{actang}), therefore the ring ${\bf G}^{-1}({\cal R}_{\beta })$ and the measure $\lambda _{\beta}\circ W$ are invariant under this flow.
Let us define a Hilbert space
$\tilde {\mathcal H}_{\beta }=
L_2(({\mathbb R}\times {\mathbb R})^{\mathbb N},{\bf G}^{-1}({\mathcal R}_{\beta }),\lambda _{ \beta}\circ {\bf G}, {\mathbb C}).$

According to the equality (\ref{G_k}), the mapping
$${\bf W_G}:\ \hat u\to u:\ u({\bf G}(\rho ,\phi ))=\hat u(\rho ,\phi ),\ (\rho ,\ phi )\in {\mathbb R}^{\mathbb N}\times {\mathbb R}^{\mathbb N},\ \hat u\in \tilde {\mathcal H}_{\beta},$$
is a unitary isomorphism of the Hilbert spaces ${\bf W_G}:\tilde {\mathcal H}_{\beta }\to {\mathcal H}_{\beta}$.

The Koopman representation of the Hamiltonian flow (\ref{actang}) in the space $\tilde {\mathcal H}_{\beta }$ has the form
$$
{\bf U}_{\hat {\bf \Psi }_t}\hat u(r,\phi)=\hat u(\hat{\bf \Psi }_t(r,\phi)), \ \ hat u\in \tilde {\mathcal H}_{\beta },\ t\in {\mathbb R}.
$$
In the set of Banach limits on the space $L_{\infty }({\mathbb R})$ we select a class of {\it Cesaro} Banach limits, representable as a limit over the ultrafilter of Cesaro averaging \cite{Sukochev}, i.e.
\begin{equation}\label{cheza}
\beta (f)=\lim\limits_{\digamma}\left({1\over {2x}}\int\limits_{-x}^xf(t)dt\right),\quad f\in L_{ \infty }(\mathbb R),
\end{equation}
where $\digamma$ is some ultrafilter focused at infinity, and $\lim\limits_{\digamma}$ is the limit on the ultrafilter $\digamma$.

Then if the Banach limit $\beta $ is given by the equality (\ref{cheza}), then for each function $f\in L_{\infty }({\mathbb R})$ due to the definition (\ref{mebe}) of the measure $\nu _{\beta}$ the equality is true
\begin{equation}\label{777}
\beta (f)=\lim\limits_{\digamma}\left({1\over {2x}}\int\limits_{-x}^xf(t)dt\right )=\int\limits_{\mathbb R}f(t)d\nu _{\beta}(t).
\end{equation}

\begin{lemma}\label{Lemma 5} {\it The Koopman representation of ${\bf U}_{\bf \hat \Psi}$ in the space $\tilde {\mathcal H}_{\beta}$ of the flow ( \ref{actang}) of the hyperbolic oscillator is a unitary group. Let, in addition, the Banach limit $\beta$ be Cesaro Banach limit. Then the unitary group ${\bf U}_{\bf \hat \Psi}${is continuous in the strong operator topology} if and only if the sequence of frequencies $\{ a _k\}$ in (\ref{hype}) is trivial. }
\end{lemma}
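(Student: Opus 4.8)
The plan is to reduce the problem to a countable tensor-product structure and then analyze strong continuity factor by factor. First I would observe that, since the flow $\hat{\bf\Psi}_t(r,\phi)=(r,\phi+at)$ acts coordinatewise and the measure $\lambda_\beta\circ{\bf G}$ is, by Theorem~\ref{Theorem 4} and the construction preceding this lemma, built as a product of the two-dimensional measures $\nu_{2,\beta}=\bigl(\int_{A}|r|\,dr\bigr)\otimes\nu_\beta$, the Hilbert space $\tilde{\mathcal H}_\beta$ contains the (incomplete) tensor product $\bigotimes_{k}L_2(\mathbb R\times\mathbb R,\nu_{2,\beta},\mathbb C)$ as a dense invariant subspace, and on each factor ${\bf U}_{\hat{\bf\Psi}_t}$ acts as the shift $(\mathsf V_k(t)v)(r_k,\phi_k)=v(r_k,\phi_k+a_kt)$ in the angular variable only. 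Unitarity of ${\bf U}_{\hat{\bf\Psi}}$ is then immediate: each $\mathsf V_k(t)$ is unitary because $\nu_\beta$ is shift-invariant (Banach-limit property), and the product of unitaries on the product space is unitary; this handles the first assertion without any assumption on $\beta$.

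For the equivalence, the easy direction is that if $\{a_k\}$ is eventually zero (finitely many nonzero frequencies), then ${\bf U}_{\hat{\bf\Psi}_t}$ acts nontrivially only on finitely many tensor factors, and on each such factor I would check strong continuity of $t\mapsto\mathsf V_k(t)$ directly for simple functions and pass to the limit by density — here one does need that translations act continuously in $L_2(\nu_\beta)$ on the dense set of indicators, which is where the \emph{Cesàro} hypothesis enters via formula~(\ref{777}): for $f$ an indicator of an interval, $\|{\mathsf S}_h f - f\|_{L_2(\nu_\beta)}^2 = \int(\chi_{A+h}-\chi_A)^2\,d\nu_\beta = \nu_\beta(A\triangle(A+h))\to 0$ as $h\to0$, because the Cesàro Banach limit of $\frac{1}{2x}\int_{-x}^x\chi_{A\triangle(A+h)}(t)\,dt$ is bounded by $|h|/$(length scale) uniformly, hence tends to $0$. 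So finitely many factors contribute a strongly continuous group and strong continuity of ${\bf U}_{\hat{\bf\Psi}}$ follows.

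The hard direction, and the main obstacle, is showing that an \emph{infinite} nontrivial frequency sequence destroys strong continuity. The strategy I would use is to exhibit a unit vector $\hat u\in\tilde{\mathcal H}_\beta$ with $\limsup_{t\to0}\|{\bf U}_{\hat{\bf\Psi}_t}\hat u-\hat u\|>0$. Take $\hat u=\bigotimes_k v_k$ a product of unit vectors, each $v_k$ depending only on $\phi_k$ through a character-like function $\psi(\phi_k)$ chosen so that $\mathsf V_k(t)v_k=e^{i\theta_k(t)}v_k+w_k(t)$ with $\langle v_k, \mathsf V_k(t)v_k\rangle\to c<1$ on a subsequence $t\to 0$ whenever $a_k t$ is bounded away from $0$ modulo the relevant period; then $\langle\hat u,{\bf U}_{\hat{\bf\Psi}_t}\hat u\rangle=\prod_k\langle v_k,\mathsf V_k(t)v_k\rangle$, and by choosing the $v_k$ and using that infinitely many $a_k$ are nonzero one arranges, for a sequence $t_n\to 0$, that infinitely many factors are $\le c<1$, forcing the infinite product to $0$ while $\|\hat u\|=1$; hence $\|{\bf U}_{\hat{\bf\Psi}_{t_n}}\hat u-\hat u\|^2=2-2\operatorname{Re}\langle\hat u,{\bf U}_{\hat{\bf\Psi}_{t_n}}\hat u\rangle\to 2$. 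The delicate points are (i) that such $\hat u$ genuinely lies in $\tilde{\mathcal H}_\beta$ (a $\lambda_\beta$-measurable hyperbolic-bar-type construction must be used so the infinite product of norms converges, cf.\ Definition~\ref{def2}), and (ii) controlling the angular overlap integrals $\int e^{\text{(something)}}\,d\nu_\beta$ against the finitely additive measure $\nu_\beta$ rather than Lebesgue measure — this is exactly where the explicit Cesàro representation~(\ref{cheza})–(\ref{777}) is needed, since $\nu_\beta$ assigns measure zero to every bounded set and one must instead compute limits of Cesàro averages of oscillatory integrands. I expect step (ii) to be the real technical heart: one must show that for the test functions chosen, $\beta(e^{i a_k(\cdot)})$-type quantities behave as in the Lebesgue case (i.e.\ vanish for $a_k\neq0$), which follows because $\frac{1}{2x}\int_{-x}^x e^{i a_k s}\,ds=\frac{\sin(a_k x)}{a_k x}\to 0$, a limit along any ultrafilter at infinity.
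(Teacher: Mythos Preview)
Your proposal misreads the statement: in this lemma ``trivial'' means $a_k=0$ for \emph{all} $k$, not ``finitely supported''. The paper proves that a \emph{single} nonzero frequency already destroys strong continuity, so the dichotomy you set up (finite support $\Rightarrow$ strongly continuous; infinite support $\Rightarrow$ not) is the wrong one, and your ``easy direction'' is in fact false.

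The concrete gap is in your argument that the one-dimensional shift $({\mathsf S}_hv)(\phi)=v(\phi+h)$ is strongly continuous on $L_2(\mathbb R,\nu_\beta)$. You test this on indicator functions $\chi_A$ of bounded intervals and compute $\nu_\beta(A\triangle(A+h))$. But for a Ces\`aro Banach limit, $\nu_\beta$ assigns measure \emph{zero} to every bounded set: $\frac{1}{2x}\int_{-x}^{x}\chi_B(t)\,dt\to 0$ whenever $B$ is bounded. Hence $\chi_A=0$ in $L_2(\nu_\beta)$, and your density argument proves nothing. The genuinely nonzero vectors in $L_2(\nu_\beta)$ are functions that live at infinity, such as $e^{i\phi^2}$, and on those the shift is \emph{not} strongly continuous: with $v(\phi)=e^{i\phi^2}$ one has $\langle v,{\mathsf S}_hv\rangle=e^{ih^2}\int_{\mathbb R}e^{2ih\phi}\,d\nu_\beta(\phi)=e^{ih^2}\lim_{\digamma}\frac{\sin(2hx)}{2hx}=0$ for every $h\neq 0$, while $\langle v,v\rangle=1$.

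This is exactly the mechanism the paper uses. Fixing any $k$ with $a_k\neq 0$, the paper takes the test vector
\[
\hat u(\rho,\phi)=\Bigl(\prod_{j=1}^\infty\chi_{[a,b]}(r_j)\Bigr)e^{i\phi_k^{2}},\qquad \int_a^b|r|\,dr=1,
\]
which lies in $\tilde{\mathcal H}_\beta$ with unit norm, and computes $g(t)=({\bf U}_{\hat{\bf\Psi}}(t)\hat u,\hat u)=\int_{\mathbb R}e^{i(\phi_k+a_kt)^2-i\phi_k^2}\,d\nu_\beta(\phi_k)=0$ for $t\neq 0$, while $g(0)=1$. Thus the inner product has a jump at $t=0$ and strong continuity fails. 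Your elaborate infinite-tensor-product construction for the ``hard direction'' is unnecessary; a single coordinate suffices, and the converse (all $a_k=0$) is immediate since then ${\bf U}_{\hat{\bf\Psi}}(t)\equiv{\bf I}$.
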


\begin{proof}
The unitarity of the group follows from the invariance of the measure $\lambda _{\beta}$ with respect to the flow (\ref{actang}). If we assume that $\lambda _k\neq 0$, then the group ${\bf U}_{\hat \bf \Psi}$ does not satisfy the condition of strong continuity. This follows from the existence of such a vector $\hat u
\in \tilde {\cal H}_{\beta}$ that the vector-valued function ${\bf U}_{\hat \bf \Psi}(t)u,\ t\in {\mathbb R},$ is not continuous at zero.
For the vector $\hat u\in \tilde {\cal H}_{\beta}$ we choose the function
$$\hat u(\rho ,\phi )=\left (\prod\limits_{j=1}^{\infty }\chi _{[a,b]}(r_j)\right ) e^{i \phi_k^2},\ \phi \in {\mathbb R}^{\mathbb N},\, \rho \in {\mathbb R}^{\mathbb N},$$
where $a,b\in \mathbb R$ are such that $\int\limits_a^b|r|dr=1$.
Then the vector-valued function ${\bf U}_{\hat \bf \Psi}(t)u,\ t\in {\mathbb R},$ of the real argument is not continuous, because the numerical function $g (t)=( {\bf U}_{\hat \bf \Psi}(t)u,u)_{\tilde {\cal H}_{\beta}},\ t\in \mathbb R$, has a removable discontinuity at $t=0$.

Indeed, $g(0)=\|\hat u\| ^2_{\tilde {\cal H}_{\beta }}=1$. While for each $t\neq 0$, by virtue of the definition of the measure $\lambda _{\beta }\circ {\bf G}$, the equality
$$g(t)=({\bf U}_{\hat \bf \Psi}(t)u,u)_{\tilde {\cal H}_{\beta}}=\int\limits_{ {\mathbb R}^{\mathbb N}\times {\mathbb R}^{\mathbb N}}\hat u(r,\phi +at){\overline {\hat u(r,\phi )} }d(\lambda _{\beta}\circ {\bf G}).$$
By virtue of equality (\ref{8}), which defines the measure $\lambda _{\beta}$, we obtain
$$g(t)=\int\limits_{\mathbb R}e^{i(\phi _k+a_kt)^2}e^{-i\phi _k^2}d\nu _{\beta }( \phi _k).$$ Since the Banach limit $\beta$ is Cesaro Banach limit, by virtue of (\ref{777}) we obtain that for all $t\neq 0$ the equality $g(t)=e^{ia_k^ holds 2t^2}\lim\limits_{\digamma}({1\over {2x}}\int\limits_{-x}^xe^{2ita_x\phi _k})d\nu _{\beta}\phi _k =0$, confirming the discontinuity at zero of the function $g$.
\end{proof}

Although the unitary group ${\bf U}_{\bf \hat \Psi}$ in the space $\tilde {\mathcal H}_{\beta}$ is discontinuous, we can determine its invariant subspaces of strong continuity by finding  eigenvalues and eigenvalues vectors of operators of this semigroup. To do this, we introduce the following space.

Let ${\mathcal K}^{rad}_{\beta}({\mathbb E})\subset {\mathcal K}_{\beta}({\mathbb E})$
is  a set of absolutely measurable hyperbolic bars that are radially symmetric in the sense that
in the representation (\ref{7}) $B_j=\mathbb R$ for all $j\in \mathbb N$.
Let $r_{\beta}^{rad}$ be the ring generated by a set of sets
${\mathcal K}^{rad}_{\beta}({\mathbb E})$ and ${\mathcal R}_{\beta}^{rad}$ -- completion of the ring $r_{\beta} ^{rad}$ in measure $\lambda _{{\beta}}$.
Let ${\mathcal H}^{rad}_{\beta }=L_2(E,{\mathcal R}^{rad}_{\beta },\lambda _{\beta },{\mathbb C}) $ and $\tilde {\mathcal H}^{rad}_{\beta }={\bf W_G}^{-1}({\mathcal H}^{rad}_{\beta })$.

The function $u(\phi _k)=e^{im\phi _k}\in L_2({\mathcal R},{\cal L}({\mathbb R}),\nu _{\beta },{\mathbb C})$ is the eigenfunction of the operator ${\bf L}_ku={{\partial u}\over {\partial \phi _k}}$ corresponding to the eigenvalue $im$ for each $m\in \mathbb R $. Since the phase flow $\hat {\bf \Psi}$ has the form (\ref{actang}), we have proven the following statement.
\begin{theorem}\label{Th7.5} {\it The unitary group ${\bf U}_{\bf \Psi }$ admits an invariant subspace ${\mathcal H}_{\Psi }$ such that the group ${\bf U}_{\bf \Psi}|_{{\mathcal H}_{\Psi }}$ is strongly continuous in the space ${\mathcal H}_{\Psi}$. The generator ${\bf H}_{\Psi }$ of a strongly continuous group ${\bf U}_{\bf \Psi}|_{{\mathcal H}_{\Psi }}$ has a continuum of eigenvalues
$$
a _{m_1,...,m_N}=m_1a _1+...+m_Na _N,\quad N\in {\mathbb N},\ m_1,...,m_N\in {\mathbb R}.$$
In this case, ${\mathcal H}_{\bf \Psi}\supset \oplus_{\vec m}{\mathcal H}_{\vec m}$, where $\vec m =\{ m_1,... ,m_N,0,0...\}$, $(\vec m,\phi )=m_1\phi _1+...+m_N\phi _N$ and
\begin{equation*}
{\mathcal H}_{\vec m}={\bf W_G}(\{ e^{i(\vec m,\phi)}g,\ g\in \tilde {\mathcal H}^{rad} _{\beta }\}) \subset {\rm Ker}({\bf H}_{\Psi }-a _{\vec m}{\bf I}).
\end{equation*}
}
\end{theorem}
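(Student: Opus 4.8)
The plan is to exhibit the claimed invariant subspace explicitly as the closed span of the mutually orthogonal ``angular-mode'' subspaces $\mathcal H_{\vec m}$, and to verify strong continuity mode by mode. First I would fix a finite multi-index $\vec m=\{m_1,\dots,m_N,0,0,\dots\}$ with $m_1,\dots,m_N\in\mathbb R$ and set $\mathcal H_{\vec m}={\bf W_G}(\{e^{i(\vec m,\phi)}g:\ g\in\tilde{\mathcal H}^{rad}_\beta\})$, where $e^{i(\vec m,\phi)}=\prod_{k=1}^N e^{im_k\phi_k}$ depends only on the angular variables. Using the action-angle representation (\ref{actang}) of the flow, for $\hat u(r,\phi)=e^{i(\vec m,\phi)}g(r)$ with $g\in\tilde{\mathcal H}^{rad}_\beta$ one computes
\begin{equation*}
{\bf U}_{\hat{\bf\Psi}_t}\hat u(r,\phi)=\hat u(r,\phi+at)=e^{i(\vec m,\phi+at)}g(r)=e^{i(\vec m,a)t}\,\hat u(r,\phi),
\end{equation*}
since the radial part is unchanged and $g$ does not depend on $\phi$. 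Here $(\vec m,a)=m_1a_1+\dots+m_Na_N=:a_{\vec m}$. Thus each $\hat u\in\tilde{\mathcal H}_{\vec m}:={\bf W_G}^{-1}(\mathcal H_{\vec m})$ is an eigenvector of every ${\bf U}_{\hat{\bf\Psi}_t}$ with eigenvalue $e^{ia_{\vec m}t}$, so the one-parameter family restricted to $\tilde{\mathcal H}_{\vec m}$ is $t\mapsto e^{ia_{\vec m}t}{\bf I}$, which is manifestly norm-continuous, hence strongly continuous; and $\tilde{\mathcal H}_{\vec m}\subset{\rm Ker}({\bf H}_{\hat\Psi}-a_{\vec m}{\bf I})$ where ${\bf H}_{\hat\Psi}=\sum_{k\ge1}a_k{\bf L}_k$ is the (formal) generator, using that $e^{im_k\phi_k}$ is the eigenfunction of ${\bf L}_k=\partial/\partial\phi_k$ with eigenvalue $im_k$ noted just before the statement.

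Next I would assemble these pieces. For two distinct finite multi-indices $\vec m\neq\vec m'$ the subspaces are orthogonal: on some coordinate $k\le\max(N,N')$ one has $m_k\neq m_k'$, and $\int_{\mathbb R}e^{i(m_k-m_k')\phi_k}\,d\nu_\beta(\phi_k)=0$ because $\nu_\beta$ is translation invariant and normalized (the integral of a nonconstant character against a shift-invariant mean vanishes, exactly as in the computation in the proof of Lemma~\ref{Lemma 5}). Therefore I would set
\begin{equation*}
\mathcal H_{\bf\Psi}=\overline{\bigoplus_{\vec m}\mathcal H_{\vec m}}\,,
\end{equation*}
the closure of the algebraic direct sum over all finite real multi-indices $\vec m$. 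This space is ${\bf U}_{\bf\Psi}$-invariant because each summand is, and on it ${\bf U}_{\bf\Psi}$ acts diagonally: on $\mathcal H_{\vec m}$ it is multiplication by $e^{ia_{\vec m}t}$. Strong continuity on $\mathcal H_{\bf\Psi}$ then follows by a standard $\varepsilon/3$ argument: any $v\in\mathcal H_{\bf\Psi}$ is approximated in norm by a finite sum $v_0\in\bigoplus_{\vec m\in F}\mathcal H_{\vec m}$, on which ${\bf U}_{\bf\Psi}(t)v_0\to v_0$ as $t\to0$ since it is a finite combination of norm-continuous curves, and $\|{\bf U}_{\bf\Psi}(t)\|=1$ controls the tail uniformly in $t$. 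Unitarity of ${\bf U}_{\bf\Psi}$ itself on all of $\mathcal H_\beta$ (equivalently $\tilde{\mathcal H}_\beta$) is Lemma~\ref{Lemma 5}. Finally, transporting everything back through the unitary isomorphism ${\bf W_G}:\tilde{\mathcal H}_\beta\to\mathcal H_\beta$ gives the statement in the phase-space picture, with generator ${\bf H}_{\Psi}={\bf W_G}\,{\bf H}_{\hat\Psi}\,{\bf W_G}^{-1}$ and eigenvalues $a_{m_1,\dots,m_N}=m_1a_1+\dots+m_Na_N$; since $m_1,\dots,m_N$ range over all of $\mathbb R$ (even already for $N=1$ with $a_1\neq0$) this eigenvalue set is a continuum, establishing the displayed formula and the inclusion $\mathcal H_{\vec m}\subset{\rm Ker}({\bf H}_{\Psi}-a_{\vec m}{\bf I})$.

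The main obstacle I anticipate is not the diagonalization itself but the measure-theoretic bookkeeping needed to make ``$g\in\tilde{\mathcal H}^{rad}_\beta$'' and the infinite products legitimate: one must check that the radially symmetric bars really do generate a ring ${\mathcal R}^{rad}_\beta$ on which $\lambda_\beta$ restricts sensibly, that the characters $e^{i(\vec m,\phi)}$ genuinely lie in $L_2$ against the finitely additive measure $\lambda_\beta\circ{\bf G}$ (boundedness makes this routine, but the nonseparability and finite additivity mean one should argue through simple functions as in Theorem~\ref{r-space}), and — most delicately — that the orthogonality integral $\int e^{i(m_k-m_k')\phi_k}d\nu_\beta(\phi_k)=0$ is valid for \emph{all} real $m_k\neq m_k'$, which for a general Banach limit requires the shift-invariance of $\nu_\beta$ applied to the bounded function $\phi_k\mapsto e^{i(m_k-m_k')\phi_k}$ exactly as in the end of the proof of Lemma~\ref{Lemma 5}. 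A secondary point is that ${\bf H}_{\Psi}$ is only the generator of the restricted strongly continuous group ${\bf U}_{\bf\Psi}|_{\mathcal H_{\Psi}}$ (the full group being discontinuous), so the statement should be — and is — phrased accordingly, and no claim of essential self-adjointness of $\sum_k a_k{\bf L}_k$ on a larger domain is needed.
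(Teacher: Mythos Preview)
Your proposal is correct and follows essentially the same approach as the paper's proof: verify that each $e^{i(\vec m,\phi)}g$ is an eigenvector of ${\bf U}_{\hat{\bf\Psi}}(t)$ with eigenvalue $e^{ia_{\vec m}t}$ via the action--angle representation (\ref{actang}), deduce invariance and strong continuity on each ${\mathcal H}_{\vec m}$, check pairwise orthogonality of the ${\mathcal H}_{\vec m}$ by direct computation, and assemble the direct sum. Your write-up is in fact more explicit than the paper's --- you spell out the $\varepsilon/3$ argument for strong continuity on the closed span and the shift-invariance reason why $\int_{\mathbb R}e^{i\alpha\phi}\,d\nu_\beta(\phi)=0$ for $\alpha\neq 0$, both of which the paper leaves as ``direct calculation'' --- but the underlying strategy is identical.
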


\begin{proof}
It is easy to check that for each $\vec m =\{ m_1,...,m_N,0,0...\}$ the function $e^{i(\vec m,\phi)}g,\ g \in \tilde {\mathcal H}^{rad}_{\beta }$ is an eigenfunction of the operator ${\bf U}_{\hat {\bf \Psi}}(t)$ corresponding to the eigenvalue $e^ {it(m_1\lambda _1+...+m_N\lambda _N)}$. Therefore, for each $\vec m =\{ m_1,...,m_N,0,0...\}$ the subspace ${\cal H}_{\vec m}$
is invariant under the group $\bf U_{\Psi}$ and the restriction of the ${\bf U}_{\hat {\bf \Psi}}|_{{\cal H}_{\vec m}}$ group to this an invariant subspace is a strongly continuous group in the space ${\cal H}_{\vec m}$. The orthogonality of the subspaces ${\cal H}_{\vec m_1}$ and ${\cal H}_{\vec m_2}$ for $\vec m_1\neq \vec m_2$ follows from the direct calculation of the scalar product of functions of the form $e ^{i(\vec m_1,\phi)}g_1$ and $e^{i(\vec m_2,\phi)}g_2,\ g_1,g_2\in \tilde {\mathcal H}^{rad}_{ \beta }$. Consequently, the subspace $\oplus _{\vec m}{\cal H}_{\vec m}$ is invariant under the group $\bf U_{\Psi}$ and the restriction ${\bf U}_{\hat {\ bf \Psi}}|_{\oplus _{\vec m}{\cal H}_{\vec m}}$ is a strongly continuous group in the space $\oplus _{\vec m}{\cal H}_ {\vec m}$.
\end{proof}

\end{document}